\documentclass[11pt]{amsart}

\usepackage{fullpage}
\usepackage{amsmath, amssymb, enumerate, color, graphicx, bm, url}

\textheight=22cm

\newtheorem{theorem}{Theorem}[section]

\newtheorem{corollary}[theorem]{Corollary}
\newtheorem{lemma}[theorem]{Lemma}

\newtheorem{definition}[theorem]{Definition}

\theoremstyle{remark}
\newtheorem{remark}{Remark}

\numberwithin{equation}{section}

\DeclareMathOperator*{\conv}{conv}
\DeclareMathOperator*{\E}{\mathbb{E}}

\DeclareMathOperator*{\Id}{Id}

\DeclareMathOperator*{\sign}{sign}
\DeclareMathOperator*{\supp}{supp}

\def \N {\mathbb{N}}
\def \R {\mathbb{R}}

\def \P {\mathbb{P}}

\def \NN {\mathcal{N}}

\def \MM {\mathcal{M}}

\def \e {\varepsilon}

\def \d {\delta}

\def \l {\lambda}

\def \< {\langle}
\def \> {\rangle}
\def \^ {\widehat}


\newcommand{\rv}[1]{#1}
\newcommand{\norm}[1]{\left \|#1\right \|}
\newcommand{\zeronorm}[1]{\norm{#1}_0}
\newcommand{\onenorm}[1]{\norm{#1}_1}
\newcommand{\twonorm}[1]{\norm{#1}_2}

\newcommand{\abs}[1]{\left | #1 \right |}
\renewcommand{\Pr}[1]{\P \left\{ #1 \rule{0mm}{3mm}\right\}}

\newcommand{\vect}[1]{\bm{#1}}
\newcommand{\mat}[1]{\bm{#1}}
\def \va {\vect{a}}

\def \vu {\vect{u}}
\def \vv {\vect{v}}
\def \vw {\vect{w}}
\def \vx {\vect{x}}
\def \vy {\vect{y}}
\def \vz {\vect{z}}

\def \mA {\mat{A}}

\title{One-bit compressed sensing by linear programming}

\author{Yaniv Plan}
\author{Roman Vershynin}

\date{September 19, 2011}

\address{Department of Mathematics,
  University of Michigan,
  530 Church St.,
  Ann Arbor, MI 48109, U.S.A.}

\email{\{yplan,romanv\}@umich.edu}

\subjclass[2000]{94A12; 60D05; 90C25}

\thanks{Y.P. is supported by an NSF Postdoctoral Research Fellowship under award No. 1103909. 
  R.V. is supported by NSF grants DMS 0918623 and 1001829.}

\begin{document}

\begin{abstract}
  We give the first computationally tractable and almost optimal solution to 
  the problem of one-bit compressed sensing, showing how to accurately recover an $s$-sparse vector $\vx \in \R^n$  
  from the signs of $O(s \log^2(n/s))$ random linear measurements of $\vx$.
  The recovery is achieved by a simple linear program.
  This result extends to approximately sparse vectors $\vx$. Our result is universal in the sense that 
  with high probability, one measurement scheme will successfully recover all sparse vectors simultaneously.  
  The argument is based on solving an equivalent geometric problem on random hyperplane tessellations. 
\end{abstract}

\maketitle

\section{Introduction}

Compressed sensing is a modern paradigm of data acquisition, which is having an impact on several disciplines, see \cite{Mackenzie2009}.
The scientist has access to a measurement vector $\vv \in \R^m$ obtained as
\begin{equation}							\label{v=Ax}
\vv = \mA \vx,
\end{equation}
where $\mA$ is a given $m \times n$ measurement matrix and $\vx \in \R^n$ is an unknown signal that one needs to recover from $\vv$.
One would like to take $m \ll n$, rendering $\mA$ non-invertible; 
the key ingredient to successful recovery of $\vx$ is take into account its assumed structure -- sparsity.  
Thus one assumes that $\vx$ has at most $s$ nonzero entries, although the support pattern is unknown.  
The strongest known results are for random measurement matrices $\mA$.  
In particular, if $\mA$ has Gaussian i.i.d. entries, then we may take $m = O(s \log(n/s))$ and still recover 
$\vx$ exactly with high probability \cite{Candes2006, Candes2005}; see \cite{Vershynin2010} for an overview.
Furthermore, this recovery may be achieved in polynomial time by solving the convex minimization program
\begin{equation}						\label{eq:cs recovery}
\min \onenorm{\vx'} \quad \text{subject to} \quad \mA \vx' = \vv.
\end{equation}
Stability results are also available when noise is added to the problem \cite{Candes2006a, Candes2007, Bickel2009, Wojtaszczyk2009}.

However, while the focus of compressed sensing is signal recovery with minimal information, 
the classical set-up \eqref{v=Ax}, \eqref{eq:cs recovery} assumes {\em infinite bit precision} of the measurements.  
This disaccord raises an important question: how many bits per measurement (i.e. per coordinate of $\vv$) are sufficient
for tractable and accurate sparse recovery? This paper shows that {\em one bit per measurement} is enough. 

There are many applications where such severe quantization may be inherent or preferred ---
analog-to-digital conversion \cite{Laska2010, Jacques2011}, binomial regression in statistical modeling 
and threshold group testing \cite{Damaschke2005}, to name a few.


\subsection{Main results}

This paper demonstrates that a simple modification of the convex program \eqref{eq:cs recovery} 
is able to accurately estimate $\vx$ from extremely quantized measurement vector
$$
\vy = \sign(\mA \vx).
$$
Here $\vy$ is the vector of signs of the coordinates of $\mA \vx$.\footnote{To be precise, for a scalar $z \neq 0$ we define $\sign(z) = z/\abs{z}$, and $\sign(0) = 0$.  We allow the $\sign$ function to act on a vector by acting individually on each element.}

Note that $\vy$ contains no information about the magnitude of $\vx$, 
and thus we can only hope to recover the normalized vector $\vx/\|\vx\|_2$.
This problem was introduced and first studied by Boufounos and Baraniuk \cite{Boufounos2008}
under the name of {\em one-bit compressed sensing}; some related work is summarized in Section~\ref{sec:prior work}.

We shall show that the signal can be accurately recovered by solving the following convex 
minimization program
\begin{equation}		\label{eq:convex program}
\min \onenorm{\vx'} \quad \text{subject to} \quad \sign(\mA \vx') \equiv \vy \quad {\text{and}} 
\quad \onenorm{\mA \vx'} = m.
\end{equation}
The first constraint, $\sign(\mA \vx') \equiv \vy$, keeps the solution consistent with the measurements.  It is defined by the relation $\< \va_i, \vx' \> \cdot \vy_i \geq 0$ for $i = 1, 2, \hdots, m$, where $\va_i$ is the $i$-th row of $\mA$.
The second constraint, $\onenorm{\mA \vx'} = m$, serves to prevent the program from returning a zero solution.  
Moreover, this constraint is linear as it can be represented as one linear equation
$\sum_{i=1}^m y_i \< \va_i, \vx'\> = m$ where $y_i$ denote the coordinates of $\vy$.
Therefore \eqref{eq:convex program} is indeed a convex minimization program; furthermore one can easily  
represent it as a linear program, see \eqref{eq:linear-program} below. 
Note also that the number $m$ in \eqref{eq:convex program} is chosen for convenience of the analysis; 
it can be replaced by any other fixed positive number.

\begin{theorem}[Recovery from one-bit measurements]			\label{thm:main}
  Let $n, m, s > 0$, and let $\mA$ be an $m \times n$ random matrix with independent standard normal entries.
  Set 
  \begin{equation}							\label{eq:delta}
  \delta = C \left(\frac{s}{m} \log(2n/s) \log(2n/m + 2m/n)\right)^{1/5}.
  \end{equation}
  Then, with probability at least $1 - C \exp(- c \delta m)$, 
  the following holds uniformly for all signals $\vx \in \R^n$ satisfying 
  $\onenorm{\vx}/\twonorm{\vx} \leq \sqrt{s}$. 
  Let $\vy = \sign(\mA\vx)$. 
  Then the solution $\hat{\vx}$ of the convex minimization program \eqref{eq:convex program}
  satisfies  
  $$
  \twonorm{\frac{\hat{\vx}}{\twonorm{\hat{\vx}}} - \frac{\vx}{\twonorm{\vx}}} \leq \delta.
  $$
\end{theorem}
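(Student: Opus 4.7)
The plan is to reduce the recovery bound to a purely geometric statement about random hyperplane tessellations: with high probability over $\mA$, any two vectors in the effectively $s$-sparse spherical cap
\[
K := \{\vz \in S^{n-1} : \onenorm{\vz} \leq C\sqrt{s}\}
\]
that induce the same sign pattern under $\mA$ must be within $\delta$ of one another in $\ell_2$. Given such a uniform tessellation estimate, I would apply it to $\vu = \vx/\twonorm{\vx}$ and $\vv = \hat{\vx}/\twonorm{\hat{\vx}}$, both of which the LP forces into $K$ and both of which induce the sign vector $\vy$.

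To verify this reduction, normalize so that $\twonorm{\vx} = 1$; then $\vx \in K$ by hypothesis. Testing \eqref{eq:convex program} with the feasible candidate $\vx^\sharp = m\vx/\onenorm{\mA\vx}$, Gaussian concentration gives $\onenorm{\mA\vx} \approx m\sqrt{2/\pi}$, so $\onenorm{\vx^\sharp} \lesssim \sqrt{s}$, and the minimality of the LP yields $\onenorm{\hat{\vx}} \lesssim \sqrt{s}$. In the other direction, the equality constraint $\onenorm{\mA\hat{\vx}} = m$, combined with a uniform upper bound $\onenorm{\mA\vz} \lesssim m\twonorm{\vz}$ valid on a sufficiently large set (which itself follows from Gaussian concentration together with a net argument), forces $\twonorm{\hat{\vx}} \gtrsim 1$. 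Dividing shows $\hat{\vx}/\twonorm{\hat{\vx}} \in K$. Sign consistency of the two candidates is built directly into the first constraint of \eqref{eq:convex program}, so the two normalized candidates do indeed land in the same cell of the random tessellation.

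What remains is the uniform tessellation estimate itself, which is the technical core of the argument. The natural parameter controlling how finely $m$ independent Gaussian hyperplanes carve up $K$ is the Gaussian mean width $w(K) \asymp \sqrt{s\log(n/s)}$. The approach I would take is to combine an $\eps$-net on $K$ of cardinality $\exp(O(w(K)^2/\eps^2))$ with a per-pair concentration bound asserting that for fixed $\vu, \vv \in S^{n-1}$ with $\twonorm{\vu-\vv} \geq \delta$, the probability that none of the $m$ random hyperplanes separates $\vu$ from $\vv$ is at most $\exp(-c\delta m)$; a union bound over pairs of net points, together with a discretization argument that controls continuous fluctuations off the net, then transfers this to a uniform statement over $K$. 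The \emph{main obstacle} is exactly this trade-off: the net must be fine enough that the discretization error is small compared to $\delta$, yet coarse enough that the union bound still beats $\exp(-c\delta m)$. Balancing these two competing constraints produces the unusual fifth-root exponent of $\delta$ on $s\log(n/s)/m$ visible in \eqref{eq:delta}, rather than the square-root scaling one might naively hope for; any sharpening would seem to require replacing the single-scale net by a genuine chaining argument.
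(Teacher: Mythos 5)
Your high-level reduction---prove a uniform random-tessellation estimate over an effectively $s$-sparse spherical cap, then show that both $\vx/\twonorm{\vx}$ and $\hat{\vx}/\twonorm{\hat{\vx}}$ land in such a cap and in the same tessellation cell---is exactly the paper's strategy, and your tessellation sketch (net, per-pair separation with margin, control of off-net fluctuations) is in the right spirit. Your bound $\onenorm{\hat{\vx}} \lesssim \sqrt{s}$ via the feasible candidate $\vx^\sharp = m\vx/\onenorm{\mA\vx}$ also matches the paper (the paper's $\lambda^{-1}\vx$ is your $\vx^\sharp$), and the requisite uniform \emph{lower} bound $\frac{1}{m}\onenorm{\mA\vz} \gtrsim \twonorm{\vz}$ over the effectively sparse cap is a legitimate net-plus-concentration statement (Lemma~\ref{lem:L1 RIP}).

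The genuine gap is the lower bound $\twonorm{\hat{\vx}} \gtrsim 1$. You propose to get it from the equality constraint $\onenorm{\mA\hat{\vx}} = m$ together with a uniform \emph{upper} bound $\onenorm{\mA\vz} \lesssim m\twonorm{\vz}$ ``valid on a sufficiently large set.'' This is the step that cannot be pushed through by a net argument alone. Over all of $\R^n$ the bound is false in the interesting regime $n \gg m$: the worst case is $\sup_{\vz}\onenorm{\mA\vz}/\twonorm{\vz} = \|\mA\|_{2\to 1} \asymp m + \sqrt{mn}$, so $\frac{1}{m}\onenorm{\mA\vz}$ can be as large as $\sqrt{n/m}\twonorm{\vz}$. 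Restricting to effectively sparse vectors is circular: what you control a priori is $\onenorm{\hat{\vx}} \lesssim \sqrt{s}$, hence $\hat{\vx}/\twonorm{\hat{\vx}}$ lies in $K_{n,s'}\cap S^{n-1}$ with $s' \sim s/\twonorm{\hat{\vx}}^2$, and this $s'$ is governed by the very quantity $\twonorm{\hat{\vx}}$ you are trying to bound from below; if $\twonorm{\hat{\vx}}$ were small, $s'$ would exceed $m$ and the uniform upper bound would fail. The paper sidesteps this with a combinatorial device you do not anticipate (Lemma~\ref{lem:min-l2}): recast~\eqref{eq:convex program} as the linear program~\eqref{eq:linear-program}, note that with probability one the minimizer is a vertex of the feasible polytope, characterize each vertex by a pair $(T,\Omega)$ with $\abs{T}=\abs{\Omega}+1$, observe that the direction of such a vertex is determined by $\mA_T^\Omega$ alone and is therefore independent of the rows $\va_i$, $i\notin\Omega$, and then apply Gaussian concentration of $\frac{1}{m}\sum_{i\notin\Omega}\abs{\<\va_i,\bar{\vx}\>}$ plus a union bound over the at most $\exp\bigl(Cm\log(2n/m+2m/n)\bigr)$ choices of $(T,\Omega)$. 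This yields only $\twonorm{\hat{\vx}} \gtrsim 1/\sqrt{\log(2n/m+2m/n)}$, not $\gtrsim 1$, and that loss is precisely the source of the factor $\log(2n/m+2m/n)$ in~\eqref{eq:delta}; your attribution of the unusual exponent entirely to the net/union-bound trade-off in the tessellation step (which accounts only for the $\log(2n/s)$ factor and the fifth root) misses this second mechanism.
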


Here and thereafter $C$ and $c$ denote positive absolute constants; other standard notation is explained 
in Section~\ref{sec:notation}.

\begin{remark}[Effective sparsity]
  The Cauchy-Schwarz inequality implies that $\onenorm{\vx}/\twonorm{\vx} \leq \sqrt{\zeronorm{\vx}}$
  where $\|\vx\|_0 = |\supp(\vx)|$ is the number of nonzero elements of $\vx$. 
  Therefore one can view 
  the parameter $(\onenorm{\vx}/\twonorm{\vx})^2$ as a measure of {\em effective sparsity} of the signal $\vx$. 
  The effective sparsity is thus a real valued and robust extension of the sparsity parameter $\|\vx\|_0$, 
  which allows one to handle approximately sparse vectors. 
\end{remark}

Let us then state the partial case of Theorem~\ref{thm:main} for sparse signals: 

\begin{corollary}[Sparse recovery from one-bit measurements]			\label{cor:sparse}
  Let $n, m, s > 0$, and set $\d$ as in \eqref{eq:delta}.
  Then, with probability at least $1 - C \exp(- c \delta m)$, 
  the following holds uniformly for all signals $\vx \in \R^n$ satisfying $\|x\|_0 \le s$. 
  Let $\vy = \sign(\mA\vx)$. 
  Then the solution $\hat{\vx}$ of the convex minimization program \eqref{eq:convex program}
  satisfies  
  $$
  \twonorm{\frac{\hat{\vx}}{\twonorm{\hat{\vx}}} - \frac{\vx}{\twonorm{\vx}}} \leq \delta.  
  $$
\end{corollary}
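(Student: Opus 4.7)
The plan is to deduce Corollary~\ref{cor:sparse} as an immediate consequence of Theorem~\ref{thm:main}, using the elementary observation already made in the Remark on effective sparsity. The entire argument rests on showing that the $s$-sparse vectors form a subset of the signals allowed by the hypothesis of Theorem~\ref{thm:main}, so that no new probabilistic or geometric work is required.

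First, I would fix an arbitrary $\vx \in \R^n$ with $\|\vx\|_0 \le s$ and apply the Cauchy-Schwarz inequality to $\vx$ restricted to its support $S = \supp(\vx)$:
$$
\onenorm{\vx} \;=\; \sum_{i \in S} |x_i| \;\leq\; \sqrt{|S|} \cdot \Bigl(\sum_{i \in S} x_i^2 \Bigr)^{1/2} \;\leq\; \sqrt{s}\, \twonorm{\vx}.
$$
Hence $\onenorm{\vx}/\twonorm{\vx} \leq \sqrt{s}$, so $\vx$ satisfies the effective-sparsity hypothesis of Theorem~\ref{thm:main}.

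Second, I would invoke Theorem~\ref{thm:main} with the same parameters $n, m, s$ and the same value of $\delta$ defined by \eqref{eq:delta}. The theorem guarantees that, with probability at least $1 - C\exp(-c\delta m)$ over the draw of $\mA$, the conclusion
$$
\twonorm{\frac{\hat{\vx}}{\twonorm{\hat{\vx}}} - \frac{\vx}{\twonorm{\vx}}} \leq \delta
$$
holds uniformly for every $\vx$ with $\onenorm{\vx}/\twonorm{\vx} \leq \sqrt{s}$. Restricting this uniform statement to the subclass of $s$-sparse vectors, which by the previous step lies inside the signal class handled by the theorem, yields exactly the conclusion of Corollary~\ref{cor:sparse}.

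There is no real obstacle here: the work is entirely absorbed into Theorem~\ref{thm:main}. The only point worth emphasizing is that the containment of $s$-sparse vectors in the effective-sparsity ball is universal over the set of signals, so the \emph{same} high-probability event on $\mA$ that yields uniform recovery in Theorem~\ref{thm:main} also yields uniform recovery in the corollary, with an identical success probability and error bound.
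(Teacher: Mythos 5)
Your proof is correct and is exactly the argument the paper intends: the corollary is labeled a "partial case" of Theorem~\ref{thm:main}, and the preceding remark records the very Cauchy--Schwarz inequality $\onenorm{\vx}/\twonorm{\vx} \leq \sqrt{\zeronorm{\vx}}$ that you use to see $s$-sparse vectors are effectively $s$-sparse. No further comment is needed.
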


\begin{remark}[Number of measurements]
  The conclusion of Corollary~\ref{cor:sparse} can be stated
  in the following useful way. \rv{With high probability, 
  {\em an arbitrarily accurate estimation of every $s$-sparse vector $\vx$ can be achieved from 
  $$
  m = O(s \log^2(n/s)) 
  $$
  one-bit random measurements.} 
  The implicit factor in the $O(\cdot)$ notation depends only on the desired accuracy level $\d$; 
  more precisely $m \sim \d^{-5} s \log^2(n/s)$ up to an absolute constant factor.}
  The same holds if $\vx$ is only effectively $s$-sparse as in Theorem~\ref{thm:main}.  
  The central point here is that the number of measurements is almost linear in the sparsity $s$, 
  which can be much smaller than the ambient dimension $n$.
\end{remark}

\begin{remark}[Non-gaussian measurements]
  Most results in compressed sensing, and in random matrix theory in general, 
  are valid not only for Gaussian random matrices but also for general random 
  matrix ensembles. 
  In one-bit compressed sensing, since the measurements $\sign(\mA \vx)$ do not depend on the 
  scaling of the rows of $\mA$, it is clear that our results will not change if the rows of $\mA$ 
  are sampled independently from {\em any rotationally invariant distribution} in $\R^n$ (for example, 
  the uniform distribution on the unit Euclidean sphere $S^{n-1}$). 
  
  However, in contrast to the widespread universality phenomenon, 
  one-bit compressed sensing cannot be generalized to some of the simplest {\em discrete distributions},
  such as Bernoulli. Indeed, suppose the entries of $\mA$ are independent 
  $\pm 1$ valued symmetric random variables. Then for the vectors 
  $\vx = (1,0,0,\ldots,0)$ and $\vx' = (1,\frac{1}{2},0,\ldots,0)$
  one can easily check that $\sign(\mA \vx) = \sign(\mA \vx')$ for any number 
  of measurements $m$. So one-bit measurements can not distinguish between 
  two fixed distinct signals $\vx$ and $\vx'$ no matter how many measurements are taken.
\end{remark}

\begin{remark}[Optimality]
  For a fixed level of accuracy, our estimate on the number of measurements $m = O(s \log^2(n/s))$ 
  matches the best known number of measurements in the classical (not quantized) compressed sensing problem 
  up to the exponent $2$ of the logarithm, and up to an absolute constant factor. 
  However, we believe that the exponent $2$ can be reduced to $1$. 
  We also believe that the error $\delta$ in Theorem~\ref{thm:main} 
  may decrease more quickly as $s/m \rightarrow 0$. 
  In particular, Jacques et al. \cite{Jacques2011} demonstrate that $\vx$ is exactly sparse and is estimated using an $\ell_0$-minimization-based approach, 
  the error is upper bounded as $\delta = O((s/m)^{1 - o(1)} \log n)$; 
  they also demonstrate a lower error bound $\d  = \Omega(s/m)$ regardless of what algorithm is used.  In fact, such a result is not possible when $\vx$ is only known to be effectively sparse (i.e., $\onenorm{\vx}/{\twonorm{\vx}} \leq \sqrt{s}$).  Instead, the best possible bound is of the form $\delta = O(\sqrt{(s/m) \log(n/s)})$ (this can be checked via entropy arguments).  We believe this is achievable (and is optimal) 
  for the convex program \eqref{eq:convex program}. 
\end{remark}

\subsection{Prior work}				\label{sec:prior work}
While there have been several numerical results for quantized compressed sensing 
\cite{Boufounos2008, Boufounos2009, Boufounos2010, Laska2010, Zymnis2010}, 
as well as guarantees on the convergence of many of the algorithms used for these numerical results, 
theoretical accuracy guarantees have been much less developed.  
One may endeavor to circumvent this problem by considering quantization errors as a source of noise, 
thereby reducing the quantized compressed sensing problem to the noisy classical compressed sensing problem.  
Further, in some cases the theory and algorithms of noisy compressed sensing may be adapted to this problem as in \cite{Zymnis2010, Dai2009, jacques2011dequantizing, sun2009}; the method of quantization may be specialized in order to minimize the recovery error.  
As noted in \cite{Laska2011} if the range of the signal is unspecified, then such a noise source is unbounded, 
and so the classical theory does not apply. 
However, in the setup of our paper we may assume without loss of generality that $\twonorm{\vx} = 1$, and thus 
it is possible that the methods of Candes and Tao \cite{Candes2007} can be adapted to derive
a version of Corollary~\ref{cor:sparse} for a fixed sparse signal $\vx$. 
Nevertheless, we do not see any way to deduce by these methods a uniform result over all sparse signals $\vx$.

In a complementary line of research Ardestanizadeh et al. \cite{ardestanizadeh2009}  consider compressed sensing with a finite number of bits per measurement. 
However, the number of bits per measurement there is not one (or constant);  
this number depends on the sparsity level $s$ and the dynamic range of the signal $\vx$.
Similarly, in the work of Gunturk et al. \cite{gunturk2010, gunturk2010paper} 
on sigma-delta quantization, the number of bits per measurement 
depends on the dynamic range of $\vx$.  On the other hand, by considering sigma-delta quantization and multiple bits, the Gunturk et al. are able to provide excellent guarantees on the speed of decay of the error $\delta$ as $s/m$ decreases.

The framework of one-bit compressed sensing was introduced by Boufounos and Baraniuk in \cite{Boufounos2008}.
Jacques et al. \cite{Jacques2011} show that $O(s \log n)$ one-bit measurements 
are sufficient to recover an $s$-sparse vector with arbitrary precision; their results are also robust to bit flips.  In particular, their results require the estimate $\hat{\vx}$ to be as sparse as $\vx$, have unit norm, and be consistent with the data.  The difficulty is that the first two of these constraints are non-convex, and thus the only known program which is known to return such an estimate is $\ell_0$ minimization with the unit norm constraint---this is generally considered to be intractable.
Gupta et al. \cite{Gupta2010} demonstrate that one may tractably recover the support of $\vx$ from $O(s \log n)$ measurements.  
They give two measurement schemes.  One is non-adaptive, but the number of measurements has a quadratic dependence on the dynamic range of the signal.  The other has no such dependence but is adaptive. 
Our results settle several of these issues: (a) we make no assumption about the dynamic range of the signal, 
(b) the one-bit measurements are non-adaptive, and 
(c) the signal is recovered by a tractable algorithm (linear programming).

\subsection{Notation and organization of the paper}		\label{sec:notation}

Throughout the paper, $C$, $c$, $C_1$, etc.~denote absolute constants whose values may change from line to line. 
For integer $n$, we denote $[n] = \{1,\ldots,n\}$.  Vectors are written in bold italics, e.g., $\vx$, and their coordinates are written in plain text so that the $i$-th component of $\vx$ is $x_i$.  
For a subset $T \subset [n]$, $\vx_T$ is the vector $\vx$ restricted to the elements indexed by $T$.  
The $\ell_1$ and $\ell_2$ norms of a vector $\vx \in \R^n$ are defined as
$\|\vx\|_1 = \sum_{i=1}^n |x_i|$ and $\|\vx\|_2 = (\sum_{i=1}^n x_i^2)^{1/2}$ respectively.
The number of non-zero coordinates of $\vx$ is denoted by $\|\vx\|_0 = |\supp(\vx)|$.
The unit balls with respect to $\ell_1$ and $\ell_2$ norms are denoted by 
$B_1^n = \{ x \in \R^n:\; \|x\|_1 \le 1 \}$ and $B_2^n = \{ x \in \R^n:\; \|x\|_2 \le 1 \}$ respectively.
The unit Euclidean sphere is denoted $S^{n-1} = \{ x \in \R^n:\; \|x\|_2 = 1 \}$.

The rest of the paper is devoted to proving Theorem~\ref{thm:main}. In Section~\ref{sec:strategy}
we reduce this task to the following two ingredients: (a) Theorem~\ref{thm:effective sparsity} which states
states that {\em a solution to \eqref{eq:convex program} is effectively sparse}, 
and (b) Theorem~\ref{thm:non-convex} which analyzes a simpler but non-convex version of \eqref{eq:convex program}
where the constraint $\|\mA\vx'\|_1 = m$ is replaced by $\|\vx'\|_2 = 1$. 
The latter result can be interpreted in a geometric way in terms of {\em random hyperplane tessellations}
of a subset $K$ of the Euclidean sphere, specifically for the set of effectively sparse signals $K = S^{n-1} \cap \sqrt{s} B_1^n$.
In Section~\ref{sec:geometry} we estimate the metric entropy of $K$, and we use this
in Section~\ref{sec:tessellations} to prove our main geometric result of independent interest: {\em $m = O(s \log (n/s))$ 
random hyperplanes are enough to cut $K$ into small pieces}, 
yielding that all cells of the resulting tessellation have arbitrarily small diameter. 
This will complete part (b) above. 
For part (a), we prove Theorem~\ref{thm:effective sparsity} on the effective sparsity of solutions
in Section~\ref{sec:effective sparsity}. The proof is based on counting all possible solutions of 
\eqref{eq:convex program}, which are the vertices of the feasible polytope. 
This will allow us to use standard concentration inequalities from the Appendix
and to conclude the argument by a union bound.

\subsection*{Acknowledgement}

\rv{The authors are grateful to Sinan~G\"unt\"urk for pointing out an inaccuracy in the statement
of Lemma~\ref{lem:entropy compressible} in an earlier version of this paper. }

\section{Strategy of the proof}				\label{sec:strategy}

Our proof of Theorem~\ref{thm:main} has two main ingredients which we explain in this section. Throughout the 
paper, $\va_i$ will denote the rows of $\mA$, which are i.i.d. standard normal vectors in $\R^n$.

Let us revisit the second constraint $\onenorm{\mA \vx'} = m$ in the convex minimization program \eqref{eq:convex program}.
Consider a fixed signal $\vx'$ for the moment. Taking the expectation with respect to the random matrix $\mA$, we see that 
$$
\E \|\mA\vx'\|_1 = \sum_{i=1}^m \E |\< \va_i, \vx'\> | = c m \|\vx'\|_2
$$
where $c = \sqrt{2/\pi}$. Here we used that the first absolute moment of the standard normal random variable equals $c$.
So {\em in expectation}, the constraint $\onenorm{\mA \vx'} = m$ is equivalent to $\|\vx'\|_2 = 1$ up to constant factor $c$. 

This observation suggests that we may first try to analyze the simpler minimization program
\begin{equation}							\label{eq:non-convex program}
\min \onenorm{\vx'} \quad \text{subject to} \quad \sign(\mA \vx') = \vy \quad {\text{and}} 
\quad \twonorm{\vx'} = 1.
\end{equation}
This optimization program was first proposed in \cite{Boufounos2008}.
Unfortunately, it is non-convex due to the constraint $\twonorm{\vx'} = 1$, and therefore seems to be computationally intractable.
On the other hand, we find that the non-convex program \eqref{eq:non-convex program} is more amenable to theoretical analysis
than the convex program \eqref{eq:convex program}.

\medskip

The first ingredient of our theory will be to demonstrate that the non-convex optimization program \eqref{eq:non-convex program}
leads to accurate recovery of an effectively sparse signal $\vx$. One can reformulate this as a geometric problem
about {\em random hyperplane tessellations}. We will discuss tessellations in Section~\ref{sec:tessellations}; 
the main result of that section is Theorem~\ref{thm:separation} which immediately implies the following result:

\begin{theorem}		\label{thm:general-noiseless}
  Let $n,m,s>0$, and set
  \begin{equation}							\label{eq:delta non-convex}
  \delta = C \left(\frac{s}{m} \log(2n/s)\right)^{1/5}.
  \end{equation}
  Then, with probability at least $1 - C \exp(- c \delta m)$, the following holds uniformly for 
  all $\vx, \hat{\vx} \in \R^n$ that satisfy $\|\vx\|_2 = \|\hat{\vx}\|_2 = 1$, $\|\vx\|_1 \le \sqrt{s}$, $\|\hat{\vx}\|_1 \le \sqrt{s}$:
  $$
  \sign(\mA\hat{\vx}) = \sign(\mA\vx) \quad \text{implies} \quad \twonorm{\hat{\vx} - \vx} \leq \delta.
  $$
\end{theorem}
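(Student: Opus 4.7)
The plan is to interpret the hypothesis geometrically and reduce to a random hyperplane tessellation question on the set $K := S^{n-1} \cap \sqrt{s}\,B_1^n$ of effectively $s$-sparse unit vectors. Both $\vx$ and $\hat{\vx}$ lie in $K$, and the condition $\sign(\mA\vx) = \sign(\mA\hat{\vx})$ is equivalent to saying that they lie on the same side of every hyperplane $H_i := \va_i^\perp$, i.e., in a common cell of the tessellation of $\R^n$ induced by the $m$ random hyperplanes $\{H_i\}_{i=1}^m$. So the theorem reduces to the purely geometric claim that, with the stated probability, every cell of this tessellation that meets $K$ has Euclidean diameter at most $\delta$; this is exactly Theorem~\ref{thm:separation} as previewed in Section~\ref{sec:strategy}, and the rest of the plan is how I would prove that target.

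To prove the tessellation claim, I would first establish a metric entropy estimate for $K$ of the form $\log \NN(K,\eta) \lesssim (s/\eta^2)\log(2n/s)$, obtained by combining a Maurey-type sparse approximation inside the $\ell_1$-ball with the standard volumetric entropy of that ball; this is the role of Section~\ref{sec:geometry}. Next, for a fixed pair $\vu,\vv \in K$ at Euclidean distance at least $\delta$, the angle between them is of order $\delta$, and a single Gaussian hyperplane separates them with probability proportional to this angle; hence $m$ independent hyperplanes all fail to separate them with probability at most $\exp(-c\delta m)$. Passing to an $\eta$-net of $K$ and union-bounding over pairs, one proves separation of all $\delta$-far net points as long as $\log \NN(K,\eta) \lesssim \delta m$.

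The subtle step is then to transfer the separation from net points to arbitrary points of $K$. Approximating $\vx,\hat{\vx}$ by net elements $\vu,\vv$ may alter the sign pattern at hyperplanes that pass within Euclidean distance $\eta$ of the original points, and a pointwise bound on the number of such hyperplanes loses too much. The standard remedy is a uniform concentration inequality for the count $\#\{i : |\langle\va_i,\vz\rangle|\leq \eta\}$ over $\vz\in K$: its expectation is $\lesssim \eta m$ and it concentrates strongly enough (this is where the Appendix concentration tools come in) that one can afford to discard $O(\eta m)$ hyperplanes per point and still retain enough separating planes. Since $\vx,\hat{\vx}\in K$, one then applies the net-level separation to $\vu,\vv$ and transfers the conclusion back to $\vx,\hat{\vx}$.

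The main obstacle is the three-way balance between the net resolution $\eta$, the target accuracy $\delta$, and the number of measurements $m$. Optimizing the entropy constraint $(s/\eta^2)\log(2n/s) \lesssim \delta m$ against the approximation budget, which forces $\eta$ to be polynomially smaller than $\delta$ (concretely $\eta\sim \delta^2$ gives $m\sim \delta^{-5}s\log(2n/s)$), is what produces the exponent $1/5$ in \eqref{eq:delta non-convex}. I expect this exponent to be the unavoidable cost of the net-based argument rather than an intrinsic feature of the problem, consistent with the authors' remark that it can probably be improved; the failure probability $C\exp(-c\delta m)$ then emerges directly from the union bound after subtracting the Gaussian concentration exceptional set.
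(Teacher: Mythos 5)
Your reduction to the tessellation problem, the entropy estimate for $K$, the net-level separation with binomial concentration, and the $\eta \sim \delta^2$, $m \sim \delta^{-5}s\log(2n/s)$ optimization all match the paper's argument in Sections~\ref{sec:geometry}--\ref{sec:tessellations}. The one place where your plan genuinely diverges --- and where I think there is a real gap --- is the transfer from the net to arbitrary points.

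You propose to control $\#\{i : |\langle\va_i,\vz\rangle|\leq \eta\}$ uniformly over the continuum $\vz \in K$ and cite the Appendix concentration tools for this. Two problems. First, this count is not the right quantity: writing $\vx = \vx_0 + \eps\vx'$ with $\vx_0$ the net point, the sign of $\langle\va_i,\vx\rangle$ can disagree with $\langle\va_i,\vx_0\rangle$ at index $i$ exactly when $|\langle\va_i,\vx_0\rangle| < \eps|\langle\va_i,\vx'\rangle|$; so you need a \emph{margin} on the net-level separation (the paper's Lemma~\ref{lem:separation hyperplane} deliberately proves $\langle\va_i,\vx_0\rangle > \delta/12$, $\langle\va_i,\vy_0\rangle < -\delta/12$, not merely strict signs) combined with control of the tail inner products $|\langle\va_i,\vx'\rangle|$ over the rescaled set $K_{n,t}$, $t = 4s/\eps^2$. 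Without the margin your plan of ``discarding $O(\eta m)$ hyperplanes'' has nothing to discard \emph{against}: one separating hyperplane at the net level is not stable under any perturbation. Second, the indicator count $\#\{i : |\langle\va_i,\vz\rangle|\leq\eta\}$ is not Lipschitz in $\vz$ and does not follow from a net-plus-perturbation argument via the Appendix lemmas, which concentrate the $\ell_1$ sum $\frac{1}{m}\sum_i|\langle\va_i,\vz\rangle|$, a quantity that \emph{is} amenable to covering. The paper's workaround is exactly this: Lemma~\ref{lem:L1 RIP} gives $\sup_{\vx'\in K_{n,t}}\frac{1}{m}\sum_i|\langle\va_i,\vx'\rangle|\leq 1$ uniformly, and then Markov's inequality converts this into the count bound you want (at most $\delta m/96$ indices where the tail is large), at threshold $\sim 1/\delta$ rather than $\eta$. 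I would revise your Step~3/Step~5 to make the margin in the separation lemma explicit and to replace the count-concentration claim with the $\ell_1$-sum control on the scaled tail set $K_{n,t}$ followed by Markov; the rest of your argument then closes cleanly.
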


Theorem~\ref{thm:general-noiseless} yields a version of our main Theorem~\ref{thm:main} 
for the non-convex program \eqref{eq:non-convex program}:

\begin{theorem}[Non-convex recovery]			\label{thm:non-convex}
  Let $n, m, s > 0$, and set $\d$ as in \eqref{eq:non-convex program}.
  Then, with probability at least $1 - C \exp(- c \delta m)$, 
  the following holds uniformly for all signals $\vx \in \R^n$ satisfying 
  $\onenorm{\vx}/\twonorm{\vx} \leq \sqrt{s}$. 
  Let $\vy = \sign(\mA\vx)$. 
  Then the solution $\hat{\vx}$ of the non-convex minimization program \eqref{eq:non-convex program}
  satisfies  
  $$
  \twonorm{\hat{\vx} - \frac{\vx}{\twonorm{\vx}}} \leq \delta.
  $$
\end{theorem}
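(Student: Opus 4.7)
The plan is to derive Theorem~\ref{thm:non-convex} as a short, essentially immediate corollary of Theorem~\ref{thm:general-noiseless}. The only real work is to verify that the normalized signal $\vx/\twonorm{\vx}$ and the minimizer $\hat{\vx}$ both fall into the class of unit vectors with $\ell_1$ norm bounded by $\sqrt{s}$ to which Theorem~\ref{thm:general-noiseless} applies. Once that is done, the consistency condition $\sign(\mA\hat{\vx})=\sign(\mA\vx)$ comes for free from the feasibility of the program, and the conclusion drops out.

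Concretely, I would fix $\vx \in \R^n$ with $\onenorm{\vx}/\twonorm{\vx} \le \sqrt{s}$ and set $\vz = \vx/\twonorm{\vx}$. Then $\twonorm{\vz}=1$ and $\onenorm{\vz} \le \sqrt{s}$. Because $\sign$ is invariant under positive scalar multiplication, $\sign(\mA\vz) = \sign(\mA\vx) = \vy$, so $\vz$ is feasible for the non-convex program \eqref{eq:non-convex program}. Hence the minimizer $\hat{\vx}$ satisfies
\[
\onenorm{\hat{\vx}} \le \onenorm{\vz} \le \sqrt{s},
\]
while the constraints of \eqref{eq:non-convex program} directly force $\twonorm{\hat{\vx}}=1$ and $\sign(\mA\hat{\vx})=\vy=\sign(\mA\vz)$.

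Now both $\vz$ and $\hat{\vx}$ lie in the set $\{\vu\in\R^n : \twonorm{\vu}=1,\ \onenorm{\vu}\le\sqrt{s}\}$, and $\sign(\mA\hat{\vx})=\sign(\mA\vz)$. Invoking Theorem~\ref{thm:general-noiseless} on the event of probability at least $1-C\exp(-c\delta m)$ provided there, we conclude
\[
\twonorm{\hat{\vx} - \frac{\vx}{\twonorm{\vx}}} = \twonorm{\hat{\vx}-\vz} \le \delta,
\]
uniformly over all such $\vx$, which is exactly the claim.

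There is essentially no obstacle in this deduction: the only subtleties to watch for are (i) that $\vx=0$ is a degenerate case which can be excluded since the statement involves $\vx/\twonorm{\vx}$ only in the nonzero case, and (ii) that the uniformity in $\vx$ follows because Theorem~\ref{thm:general-noiseless} is itself uniform over pairs of feasible vectors on the high-probability event, so the same event works simultaneously for every effectively $s$-sparse signal. All the probabilistic and geometric difficulty has been pushed into Theorem~\ref{thm:general-noiseless} (and in turn into the random hyperplane tessellation result of Section~\ref{sec:tessellations}); the present theorem is only the bookkeeping step that converts that statement about pairs of unit $\ell_1$-bounded vectors into a recovery guarantee for the natural signal model $\onenorm{\vx}/\twonorm{\vx}\le\sqrt{s}$.
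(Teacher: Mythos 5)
Your proof is correct and is essentially the paper's own argument: normalize $\vx$ to unit norm (you do this explicitly via $\vz$, the paper says ``without loss of generality''), observe that the normalized signal is feasible so the minimizer inherits $\onenorm{\hat{\vx}}\le\sqrt{s}$, note $\hat{\vx}\in S^{n-1}$ and sign-consistency, and invoke Theorem~\ref{thm:general-noiseless}. There is no substantive difference from the paper's deduction.
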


\begin{proof}
We can assume without loss of generality that $\twonorm{\vx} =1$ and thus $\onenorm{\vx} \le \sqrt{s}$. 
Since $\vx$ is feasible for the program \eqref{eq:non-convex program}, we also have $\onenorm{\hat{\vx}} \leq \onenorm{\vx} \le \sqrt{s}$,
and thus $\hat{\vx} \in S^{n-1}$. Therefore Theorem \ref{thm:general-noiseless} applies to $\vx, \hat{\vx}$, and it yields
that $\twonorm{\hat{\vx} - \vx} \leq \delta$ as required.
\end{proof}

\begin{remark}[Prior work] \label{rem:prior work}
  A version of Theorem~\ref{thm:general-noiseless} was recently proved in \cite{Jacques2011} for 
  exactly sparse signals $\vx, \hat{\vx}$, i.e.~such that $\|\vx\|_2 = \|\hat{\vx}\|_2 = 1$, 
  $\|\vx\|_0 \le s$, $\|\hat{\vx}\|_0 \le s$. This latter result holds with $\d = C (s/m)^{1-o(1)} \log (2n)$.
  However, from the proof of Theorem~\ref{thm:non-convex} given above one sees that the result
  of \cite{Jacques2011} would not be sufficient to deduce our main results, even Corollary~\ref{cor:sparse} 
  for exactly sparse vectors. The reason is that our goal is to solve a tractable program that involves the $\ell_1$ norm, and thus we cannot directly assume that our estimate will be in the low-dimensional set of exactly sparse vectors.
   Our proof of Theorem~\ref{thm:general-noiseless} has to overcome 
  some additional difficulties compared to \cite{Jacques2011} caused by the absence of any control of the 
  supports of the signals $\vx, \hat{\vx}$.  In particular, the metric entropy of the set of unit-normed, sparse vectors only grows logarithmically with the inverse of the covering accuracy.  This allows the consideration of a very fine cover in the proofs in \cite{Jacques2011}.  In contrast, the metric entropy of the set of vectors satisfying $\twonorm{\vx} \leq 1$ and $\onenorm{\vx} \leq \sqrt{s}$ is much larger at fine scales, thus necessitating a different strategy of proof.  
\end{remark}

\medskip

Theorem~\ref{thm:main} would follow if we could demonstrate that 
the convex program \eqref{eq:convex program} and the non-convex program \eqref{eq:non-convex program}
were equivalent. Rather than doing this explicitly, we shall prove that the solution $\hat{\vx}$
of the convex program \eqref{eq:convex program}
essentially preserves the effective sparsity of a signal $\vx$, 
and we finish off by applying Theorem~\ref{thm:general-noiseless}.

\begin{theorem}[Preserving effective sparsity]			\label{thm:effective sparsity}
  Let $n,s>0$ and suppose that $m \geq C s \log(n/s)$. 
  Then, with probability at least $1 - C \exp(- c m)$, the following holds uniformly for all signals $\vx$ 
  satisfying $\onenorm{\vx}/\twonorm{\vx} \leq \sqrt{s}$. 
  Let $\vy = \sign(\mA\vx)$. 
  Then the solution $\hat{\vx}$ of the convex minimization program \eqref{eq:convex program} satisfies
  $$
  \frac{\onenorm{\hat{\vx}}}{\twonorm{\hat{\vx}}} 
    \leq \frac{\onenorm{\vx}}{\twonorm{\vx}} \cdot C \sqrt{\log(2n/m + 2m/n)}.
  $$
\end{theorem}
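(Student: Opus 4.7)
The plan is to exploit two complementary bounds on the minimizer $\hat{\vx}$: an upper bound on $\|\hat{\vx}\|_1$ coming from feasibility of a rescaling of $\vx$, and a lower bound on $\|\hat{\vx}\|_2$ coming from the normalization $\|\mA\hat{\vx}\|_1 = m$ together with the fact that $\hat{\vx}$ may be taken to be a vertex of the feasible polytope.

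First, we may assume without loss of generality that $\twonorm{\vx} = 1$. Observe that $\tilde{\vx} := (m/\onenorm{\mA\vx}) \vx$ is feasible for \eqref{eq:convex program}, so by optimality
$$\onenorm{\hat{\vx}} \leq \onenorm{\tilde{\vx}} = \frac{m \onenorm{\vx}}{\onenorm{\mA\vx}}.$$
Under the hypothesis $m \gtrsim s\log(n/s)$, a standard uniform lower bound $\onenorm{\mA\vz} \geq c m \twonorm{\vz}$ holds with probability $1 - e^{-cm}$ over all effectively $s$-sparse $\vz$ (proved by Bernstein's inequality plus a net argument on $S^{n-1} \cap \sqrt{s}B_1^n$, whose metric entropy is controlled in Section~\ref{sec:geometry}). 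Applied to $\vx$, this yields $\onenorm{\hat{\vx}} \lesssim \onenorm{\vx}/\twonorm{\vx}$.

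Next, I would lower-bound $\twonorm{\hat{\vx}}$ by an LP vertex analysis. Writing \eqref{eq:convex program} in standard form ($\vx' = \vu' - \vv'$ with $\vu',\vv' \geq 0$), the minimum is attained at a vertex, and such a vertex is specified by combinatorial data $(S,I,\vs)$: the support $S = \supp(\hat{\vx})$ with $|S| = k$, the signs $\vs \in \{\pm 1\}^S$, and a set $I \subset [m]$ of tight measurements (those with $\langle \va_i, \hat{\vx}\rangle = 0$) of cardinality exactly $k-1$. Then $\hat{\vx}_S = c\,\vu$ where $\vu$ is the (a.s.~unique) unit vector in the one-dimensional kernel of $\mA_{I,S}$ lying in the sign-orthant $O_\vs$, and $c$ is the unique scalar satisfying $\onenorm{\mA\hat{\vx}} = m$. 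The key identity is
$$\frac{\onenorm{\hat{\vx}}}{\twonorm{\hat{\vx}}} = \onenorm{\vu}, \qquad |c| = \frac{m}{W}, \qquad W := \sum_{i \notin I} |\langle (\va_i)_S, \vu \rangle|,$$
where the indices $i \in I$ drop out since $\langle (\va_i)_S, \vu\rangle = 0$. Combined with the first step, this gives $\onenorm{\vu} = \onenorm{\hat{\vx}} \cdot W/m \lesssim (\onenorm{\vx}/\twonorm{\vx}) \cdot W/m$, so it remains to upper-bound $W/m$ uniformly across vertex types.

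Fixing $(S,I)$, we condition on $\mA_{I,S}$, which determines $\vu$; by rotational invariance and independence the remaining $m - k + 1$ inner products $\langle (\va_i)_S, \vu\rangle$ ($i \notin I$) are i.i.d.~$N(0,1)$, so sub-Gaussian concentration (imported from the Appendix) gives a deviation bound $W \leq c_0(m-k+1) + t\sqrt{m-k+1}$ with probability $1 - e^{-c t^2}$. To make this uniform, I would take a union bound over all $(S,I,\vs)$, of which there are at most $\binom{n}{k}\binom{m}{k-1}2^k \leq \exp(C k \log(nm/k^2))$. Calibrating $t$ so that the Gaussian tail exponent dominates $k\log(nm/k^2)$ for every $k$, then summing over $k \leq n \wedge (m+1)$, produces a bound of the form $W/m \lesssim \sqrt{\log(2n/m + 2m/n)}$, which gives the stated conclusion. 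The main obstacle is the self-referential nature of this calibration—the combinatorial count depends on $k$, which itself equals $(\onenorm{\hat{\vx}}/\twonorm{\hat{\vx}})^2$ up to constants—and the need to handle the degenerate regime where $k$ is comparable to $m$ (so the Gaussian concentration of $W$ is weak) separately, using the deterministic bound $\onenorm{\vu} \leq \sqrt{k}$. The asymmetric form $\log(2n/m + 2m/n)$ reflects which of $\binom{n}{k}$ or $\binom{m}{k-1}$ dominates the count, depending on whether we are in the undersampled ($m \ll n$) or oversampled ($m \gg n$) regime.
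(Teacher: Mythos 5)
Your proposal is correct and follows essentially the same strategy as the paper. The two ingredients you identify match the paper's Lemma~\ref{lem:L1 RIP} (uniform lower bound on $\onenorm{\mA\vz}$ over effectively sparse $\vz$, giving $\onenorm{\hat{\vx}} \lesssim \onenorm{\vx}/\twonorm{\vx}$ via feasibility of the rescaled signal) and Lemma~\ref{lem:min-l2} (lower bound on $\twonorm{\hat{\vx}}$ via vertex analysis of the LP, sub-Gaussian concentration, and a union bound over the combinatorial data of the vertex). Your identity $\onenorm{\hat{\vx}}/\twonorm{\hat{\vx}} = \onenorm{\vu} = \onenorm{\hat{\vx}} \cdot W/m$ is just a repackaging of the paper's split into the two lemmas: in the paper $\twonorm{\hat{\vx}} = m/W$ and $\onenorm{\hat{\vx}} \le 2\onenorm{\vx}/\twonorm{\vx}$ are established separately and then divided.

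The one point worth flagging: the ``obstacle'' you describe in the final union-bound calibration is not actually a genuine obstruction, and the paper sidesteps the $k$-dependence entirely with a small trick you did not use. Rather than concentrating the $(m-k+1)$-term sum $W$ with a $k$-dependent deviation, the paper observes that (distributionally) $\frac{1}{m}\sum_{i\notin\Omega}\abs{\< \va_i, \bar{\vx}\>} \le \frac{1}{m}\sum_{i=1}^m \abs{\< \va_i, \vz\>}$ for a fixed unit $\vz$ --- i.e., it simply pads the sum back up to all $m$ terms, which only increases it. Then Lemma~\ref{lem:concentration} gives a clean tail $\Pr\{\frac{1}{m}\sum_{i=1}^m\abs{\< \va_i, \vz\>} > t\} \le C\exp(-cmt^2)$, whose exponent $-cmt^2$ is $k$-independent and is matched directly against the $k$-independent total vertex count $\sum_k \binom{n}{k+1}\binom{m}{k} \le \exp(Cm\log(2n/m+2m/n))$, giving $t = C_0\sqrt{\log(2n/m+2m/n)}$ in one shot. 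Your version, which keeps the deviation scale $\sqrt{m-k+1}$ and matches it against the per-$k$ count $\exp(Ck\log(nm/k^2))$, does work out after maximizing $k(m-k+1)\log(nm/k^2)/m^2$ over $k$, but it requires the case analysis you gesture at; there is no self-referentiality ($k$ is simply the union-bound index, not a quantity determined by $\hat{\vx}$ in advance), and the degenerate regime $k \approx m$ is handled for free because $m-k+1$ is then small. So your plan is sound, but the padding trick would have spared you the delicacy you were worried about.
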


This result is the second main ingredient of our argument, and it will be proved in Section~\ref{sec:effective sparsity}.
Now we are ready to deduce Theorem~\ref{thm:main}.

\begin{proof}[Proof of Theorem~\ref{thm:main}.]
Consider a signal $\vx$ as in Theorem~\ref{thm:main}, so $\onenorm{\vx}/\twonorm{\vx} \leq \sqrt{s}$. 
In view of the application of Theorem~\ref{thm:effective sparsity}, we may assume without loss of generality
that $m \geq C s \log(n/s)$. Indeed, otherwise we have $\d \ge 2$ and the conclusion of Theorem~\ref{thm:main}
is trivial. So Theorem~\ref{thm:effective sparsity} applies and gives
$$
\frac{\onenorm{\hat{\vx}}}{\twonorm{\hat{\vx}}} \le C \sqrt{s \log(2n/m + 2m/n)} =: \sqrt{s_0}.
$$
Also, as we noted above, $\onenorm{\vx}/\twonorm{\vx} \leq \sqrt{s} \le \sqrt{s_0}$.
So Theorem~\ref{thm:general-noiseless} applies for the normalized vectors 
$\vx/\|\vx\|_2$, $\hat{\vx}/\|\hat{\vx}\|_2$ and for $s_0$. Note that $\sign(\mA\hat{\vx}) = \sign(\mA\vx) = \vy$
because $\hat{\vx}$ is a feasible vector for the program \eqref{eq:convex program}. Therefore
Theorem~\ref{thm:general-noiseless} yields
$$
\twonorm{\frac{\hat{\vx}}{\twonorm{\hat{\vx}}} - \frac{\vx}{\twonorm{\vx}}} \leq \delta
$$
where
$$
\d = C \left(\frac{s_0}{m} \log(2n/s)\right)^{1/5}
= C' \left(\frac{s}{m} \log(2n/s) \log(2n/m + 2m/n)\right)^{1/5}.
$$
This completes the proof.
\end{proof}

For the rest of the paper, our task will be to prove the two ingredients above --
Theorem~\ref{thm:general-noiseless}, which we shall relate to a more general 
hyperplane tessellation problem, and Theorem~\ref{thm:effective sparsity} on 
the effective sparsity of the solution.

\section{Geometry of signal sets}				\label{sec:geometry}

Our arguments are based on the geometry of the set of effectively $s$-sparse signals
$$
K_{n,s} := \big\{ \vx \in \R^n :\; \|\vx\|_2 \le 1, \; \|\vx\|_1 \le \sqrt{s} \big\}
= B_2^n \cap \sqrt{s} B_1^n
$$
and the set of $s$-sparse signals 
$$
S_{n,s} := \big\{ \vx \in \R^n :\; \|\vx\|_2 \le 1, \; \|\vx\|_0 \le s \big\}.
$$
While the set $S_{n,s}$ is not convex, $K_{n,s}$ is, and moreover
it is a convexification of $S_{n,s}$ in the following sense.  Below, for a set $K$, we define $\conv(K)$ to be its convex hull.

\begin{lemma}[Convexification]		\label{lem:characterization}
  One has 
  $\conv(S_{n,s}) \subset K_{n,s} \subset 2\conv(S_{n,s})$.
\end{lemma}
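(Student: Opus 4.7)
The plan is to prove the two inclusions separately, with the first being essentially immediate and the second requiring the standard ``dyadic decomposition by magnitude'' trick familiar from compressed sensing.

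For $\conv(S_{n,s}) \subset K_{n,s}$, I would first check that $S_{n,s} \subset K_{n,s}$: any $\vx$ with $\|\vx\|_0 \le s$ and $\|\vx\|_2 \le 1$ satisfies, by Cauchy--Schwarz, $\|\vx\|_1 \le \sqrt{\|\vx\|_0} \cdot \|\vx\|_2 \le \sqrt{s}$. Since $K_{n,s} = B_2^n \cap \sqrt{s}\, B_1^n$ is an intersection of convex sets and hence convex, taking convex hulls of both sides yields $\conv(S_{n,s}) \subset K_{n,s}$.

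The main work is in the second inclusion $K_{n,s} \subset 2 \conv(S_{n,s})$. Given $\vx \in K_{n,s}$, the idea is to reorder the coordinates so that $|x_1| \ge |x_2| \ge \cdots$ and partition the index set into consecutive blocks $T_0, T_1, T_2, \ldots$ of size $s$ (with $T_0$ containing the $s$ largest entries). Writing $\vx = \sum_{j \ge 0} \vx_{T_j}$, each piece $\vx_{T_j}$ is $s$-sparse. To bound $\sum_j \|\vx_{T_j}\|_2$, I would estimate the top block trivially by $\|\vx_{T_0}\|_2 \le \|\vx\|_2 \le 1$, and for $j \ge 1$ use the monotonicity of the reordering to bound each coordinate of $\vx_{T_j}$ by the average of $|\vx_{T_{j-1}}|$, giving $\|\vx_{T_j}\|_\infty \le \|\vx_{T_{j-1}}\|_1 / s$ and hence $\|\vx_{T_j}\|_2 \le \sqrt{s}\,\|\vx_{T_j}\|_\infty \le \|\vx_{T_{j-1}}\|_1/\sqrt{s}$. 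Summing the telescoping bound yields $\sum_{j \ge 1} \|\vx_{T_j}\|_2 \le \|\vx\|_1/\sqrt{s} \le 1$, so altogether $\sum_{j \ge 0} \|\vx_{T_j}\|_2 \le 2$.

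To conclude, let $\alpha_j = \|\vx_{T_j}\|_2$ and $\vu_j = \vx_{T_j}/\alpha_j$ whenever $\alpha_j > 0$. Each $\vu_j$ is $s$-sparse and of unit $\ell_2$-norm, hence lies in $S_{n,s}$. Then $\vx/2 = \sum_j (\alpha_j/2)\, \vu_j$ with $\sum_j \alpha_j/2 \le 1$; absorbing any deficit into a coefficient of $\vec{0} \in S_{n,s}$ turns this into a genuine convex combination, so $\vx/2 \in \conv(S_{n,s})$, as desired. The only delicate point to verify carefully is the telescoping inequality for the tail blocks, which relies on the ordering of coordinates; everything else is a clean accounting of coefficients.
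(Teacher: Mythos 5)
Your proof is correct and follows essentially the same route as the paper: the first inclusion via Cauchy--Schwarz and convexity of $K_{n,s}$, and the second via the standard block decomposition by magnitude with the telescoping bound $\|\vx_{T_j}\|_2 \le \|\vx_{T_{j-1}}\|_1/\sqrt{s}$. Your final step spelling out the convex combination (normalizing each block and absorbing slack into a zero coefficient) is a slightly more explicit rendering of what the paper leaves implicit, but it is the same argument.
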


\begin{proof}
The first containment follows by Cauchy-Schwartz inequality, which implies for 
each $\vx \in S_{n,s}$ that $\onenorm{\vx} \leq \sqrt{s}$.
The second containment is proved using a common technique in the compressed sensing literature.  
Let $\vx \in K_{n,s}$.  Partition the support of $\vx$ into disjoint subsets $T_1, T_2, \hdots$ 
so that $T_1$ indexes the largest $s$ elements of $\vx$ (in magnitude), $T_2$ indexes the next $s$ largest elements, and so on.  
Since all $\vx_{T_i} \in S_{n,s}$, in order to complete the proof it suffices to show that 
\[
\sum_{i \ge 1} \twonorm{\vx_{T_i}} \leq 2.
\]
To prove this, first note that $\twonorm{\vx_{T_1}} \leq \twonorm{\vx} \leq 1$.  
Second, note that for $i \geq 2$, each element of $\vx_{T_i}$ is bounded in magnitude by $\onenorm{\vx_{T_{i-1}}}/s$, 
and thus $\twonorm{\vx_{T_i}} \leq \onenorm{\vx_{T_{i-1}}}/\sqrt{s}$. Combining these two facts we obtain
\begin{equation}
\label{eq:convex-containment}
\sum_{i \ge 1} \twonorm{\vx_{T_i}} 
\leq 1 + \sum_{i \geq 2} \twonorm{\vx_{T_i}} 
\leq 1 + \sum_{i \ge 2} \onenorm{\vx_{T_i}}/\sqrt{s} 
\leq 1 + \onenorm{\vx}/\sqrt{s} \leq 2,
\end{equation}
where in the last inequality we used that $\|\vx\|_1 \le \sqrt{s}$ for $\vx \in K_{n,s}$. The proof is complete.
\end{proof}

\medskip

Our arguments will rely on entropy bounds for the set $K_{n,s}$. 
Consider a more general situation, where $K$ is a bounded subset of $\R^n$ and $\e>0$ is a fixed number.
A subset $\NN \subseteq K$ is called an {\em $\e$-net} of $K$ if for every $\vx \in K$
one can find $\vy \in \NN$ so that $\|\vx-\vy\|_2 \le \e$. The minimal cardinality of an $\e$-net 
of $K$ is called the {\em covering number} and denoted $N(K, \e)$.
The number $\log N(K,\e)$ is called the {\em metric entropy} of $K$. 
The covering numbers are (almost) increasing by inclusion: 
\begin{equation}							\label{eq:monotonicity}
K' \subseteq K \quad \text{implies} \quad N(K',2\e) \le N(K,\e).
\end{equation}

Specializing to our sets of signals $K_{n,s}$ and $S_{n,s}$, we 
come across a useful example of an $\e$-net:

\begin{lemma}[Sparse net]				\label{lem:sparse net}
  Let $s \le t$. Then $S_{n,t} \cap K_{n,s}$ is an $\sqrt{s/t}$-net of $K_{n,s}$.
\end{lemma}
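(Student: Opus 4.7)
The plan is to exhibit an explicit $\sqrt{s/t}$-close point in $S_{n,t} \cap K_{n,s}$ via hard thresholding. Given $\vx \in K_{n,s}$, let $T \subset [n]$ index the $t$ largest coordinates of $\vx$ in magnitude, and set $\vy := \vx_T$ (with zeros outside $T$). Then $\|\vy\|_0 \le t$, $\|\vy\|_2 \le \|\vx\|_2 \le 1$, and $\|\vy\|_1 \le \|\vx\|_1 \le \sqrt{s}$, so indeed $\vy \in S_{n,t} \cap K_{n,s}$.

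It remains to bound $\|\vx - \vy\|_2 = \|\vx_{T^c}\|_2$. The key step is the standard two-sided estimate for the tail of a hard-threshold approximation: every entry outside $T$ is dominated by the average magnitude of entries inside $T$, hence $\|\vx_{T^c}\|_\infty \le \|\vx_T\|_1/t \le \sqrt{s}/t$. Combining this $\ell_\infty$ bound with the $\ell_1$ bound $\|\vx_{T^c}\|_1 \le \|\vx\|_1 \le \sqrt{s}$ yields
\[
\|\vx_{T^c}\|_2^2 \le \|\vx_{T^c}\|_\infty \cdot \|\vx_{T^c}\|_1 \le \frac{\sqrt{s}}{t} \cdot \sqrt{s} = \frac{s}{t},
\]
which gives $\|\vx - \vy\|_2 \le \sqrt{s/t}$ as required.

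There is no real obstacle here; the proof is a one-line $\ell_\infty$–$\ell_1$ interpolation trick applied to the tail of the largest-$t$ approximation, exactly the same idea used in the $i \ge 2$ terms of the telescoping bound in Lemma~\ref{lem:characterization}. The only point worth double-checking is that the truncation $\vy = \vx_T$ stays in $K_{n,s}$ (not just in $S_{n,t}$), but this is immediate since truncation can only decrease both $\|\cdot\|_2$ and $\|\cdot\|_1$.
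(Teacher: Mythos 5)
Your proof is correct and follows the same approach as the paper: truncate to the $t$ largest coordinates, verify the truncation stays in $S_{n,t}\cap K_{n,s}$, and bound the tail norm by $\|\vx\|_1/\sqrt{t}\le\sqrt{s/t}$. The only cosmetic difference is that you obtain the tail bound via a direct $\ell_\infty$--$\ell_1$ interpolation, whereas the paper invokes the block-decomposition telescope from the Convexification Lemma; both rest on the same observation that the tail's $\ell_\infty$ norm is controlled by $\|\vx_T\|_1/t$.
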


\begin{proof}
Let $\vx \in K_{n,s}$, and let $T \subseteq [n]$ denote the set of the indices of the $t$ largest coefficients 
of $\vx$ (in magnitude). Using the decomposition $\vx = \vx_T + \vx_{T^c}$ and noting that 
$\vx_T \in S_{n,t} \cap K_{n,s}$, we see that it suffices to check that $\|\vx_{T^c}\|_2 \le \sqrt{s/t}$.
This will follow from the same steps as in \eqref{eq:convex-containment}.  In particular, we have
\[\twonorm{\vx_{T^c}} \leq \onenorm{\vx}/{\sqrt{\abs{T}}} \leq \sqrt{s/t}\]
as required.
\end{proof}

Next we pass to quantitative entropy estimates. \rv{}
The entropy of the Euclidean ball can be estimated using a standard volume comparison argument, 
as follows (see \cite[Lemma 4.16]{Pisier}):
\begin{equation}							\label{eq:entropy ball}
N(B_2^n,\e) \le (3/\e)^n, \quad \e \in (0,1).
\end{equation}
From this we deduce a known bound on the entropy of $S_{n,s}$: 

\begin{lemma}[Entropy of $S_{n,s}$]			\label{lem:entropy sparse}
  For $\e \in (0,1)$ \rv{and $s \le n$}, we have
  $$
  \log N(S_{n,s}, \e) \le s \log \Big( \frac{9 n}{\e s} \Big).
  $$
\end{lemma}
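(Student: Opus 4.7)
The plan is to exploit the combinatorial structure of $S_{n,s}$: every vector in $S_{n,s}$ is supported on some $s$-subset of $[n]$, so $S_{n,s}$ is a union of $\binom{n}{s}$ isometric copies of the $s$-dimensional Euclidean ball, each of which has a controlled metric entropy.

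First I would write $S_{n,s} = \bigcup_{T} B_2^T$ where the union is over all subsets $T \subseteq [n]$ with $|T| = s$, and $B_2^T := \{\vx \in B_2^n :\; \supp(\vx) \subseteq T\}$ is isometric to $B_2^s$. Next, for each fixed $T$, I would invoke the volume estimate \eqref{eq:entropy ball} in dimension $s$ to obtain an $\e$-net $\NN_T$ of $B_2^T$ with $|\NN_T| \le (3/\e)^s$ and $\NN_T \subseteq B_2^T \subseteq S_{n,s}$. Taking $\NN = \bigcup_T \NN_T$, every $\vx \in S_{n,s}$ lies in some $B_2^T$ and is thus within $\e$ of a point of $\NN_T \subseteq \NN$; so $\NN$ is an $\e$-net of $S_{n,s}$ with
\[
|\NN| \;\le\; \binom{n}{s} \Big(\frac{3}{\e}\Big)^s.
\]

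Finally, I would bound $\binom{n}{s} \le (en/s)^s$, giving
\[
N(S_{n,s},\e) \;\le\; \Big(\frac{3en}{\e s}\Big)^s,
\]
and take logarithms. Since $3e < 9$, this yields $\log N(S_{n,s},\e) \le s \log(9n/(\e s))$ as claimed.

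There is no genuine obstacle here; the result is a textbook volumetric/union bound argument, and the only thing to watch is that the constant $3$ in \eqref{eq:entropy ball} combined with $e$ from $\binom{n}{s} \le (en/s)^s$ stays below $9$, which is what allows the clean constant in the statement.
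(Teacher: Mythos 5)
Your proposal is correct and follows essentially the same route as the paper: decompose $S_{n,s}$ as a union over $s$-dimensional coordinate subspaces, apply the volumetric bound \eqref{eq:entropy ball} on each, and then take a union bound with $\binom{n}{s} \le (en/s)^s$. The only minor point the paper handles that you elide is that $s$ need not be an integer (the paper replaces $s$ by $\lfloor s\rfloor$ in the binomial coefficient), but this does not change the argument or the bound.
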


\begin{proof}
We represent $S_{n,s}$ as the union of the unit Euclidean balls $B_2^n \cap \R^I$ in all $s$-dimensional coordinate subspaces, 
$I \subset [n], \, |I|=s$.
Each ball $B_2^n \cap \R^I$ has an $\e$-net for  of cardinality at most $(3/\e)^s$, according to \eqref{eq:entropy ball}.
The union of these nets forms an $\e$-net of $S_{n,s}$, and since the number of possible $I$ is 
\rv{$\binom{n}{\lfloor s \rfloor}$, 
the resulting net has cardinality at most 
$\binom{n}{\lfloor s \rfloor} (3/\e)^{\lfloor s \rfloor} \le (3en/\e s)^s$.}
Taking the logarithm completes the proof. 
\end{proof}

As a consequence, we obtain an entropy bound for $K_{n,s}$:

\begin{lemma}[Entropy of $K_{n,s}$]		\label{lem:entropy compressible}
  For $\e \in (0,1)$, we have
  \rv{
  \begin{align*}
  \log N(K_{n,s}, \e) 
  &\le \begin{cases}
    n \log\big(\frac{6}{\e}\big) & \text{if }\, 0 < \e <2 \sqrt{\frac{s}{n}} \\
    \frac{4 s}{\e^2} \, \log \big( \frac{9\e n}{s} \big) & \text{if }\, 2\sqrt{\frac{s}{n}} \leq \e \leq 1
  \end{cases} \\
  & \leq \frac{C s}{\e^2} \, \log \Big( \frac{2 n}{s} \Big).
  \end{align*}
  }
\end{lemma}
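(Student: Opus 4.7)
The plan is to split according to whether $\e$ is small or large compared to $\sqrt{s/n}$, which corresponds to whether the $\ell_2$ constraint or the $\ell_1$ constraint dominates the effective dimension of $K_{n,s}$.

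For small scales $0<\e < 2\sqrt{s/n}$, I would simply use the inclusion $K_{n,s} \subseteq B_2^n$ together with the standard volumetric bound $N(B_2^n, \e/2) \le (6/\e)^n$ from \eqref{eq:entropy ball}. After the usual center-shifting trick (to keep centers inside $K_{n,s}$ at the cost of doubling the radius), this yields $N(K_{n,s}, \e) \le (6/\e)^n$, which gives the first piecewise bound on taking logarithms. At these scales the $\ell_1$ constraint is too loose to provide any improvement.

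For large scales $2\sqrt{s/n} \le \e \le 1$, the strategy is to use the sparse approximation lemma (Lemma~\ref{lem:sparse net}) to reduce $K_{n,s}$ to a subset of $S_{n,t}$, then apply the sparse entropy bound (Lemma~\ref{lem:entropy sparse}). Set $t = \lceil 4s/\e^2 \rceil$; the large-scale assumption $\e \ge 2\sqrt{s/n}$ ensures $t \le n$, and Lemma~\ref{lem:sparse net} gives that $S_{n,t} \cap K_{n,s}$ is a $\sqrt{s/t} \le \e/2$ net of $K_{n,s}$. I would then cover $S_{n,t}$ (which contains this net) by an $\e/2$-net of log-cardinality at most $t\log(18n/(\e t))$ via Lemma~\ref{lem:entropy sparse}. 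A triangle-inequality composition turns the two nets into an $\e$-net of $K_{n,s}$. Substituting $t \approx 4s/\e^2$ and simplifying $\log(18n/(\e t)) \le \log(9\e n/s)$ produces the second piecewise bound.

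The main obstacle, modest as it is, lies in verifying the clean unified bound $Cs/\e^2 \log(2n/s)$ from both pieces with one absolute constant. Case 2 is straightforward: since $\e \le 1$, $\log(9\e n/s) \le \log(9n/s) \le C'\log(2n/s)$. Case 1 is subtler since it is not obvious that $n\log(6/\e) \le Cs/\e^2\log(2n/s)$ when $\e$ is very small. I would use $n \le 4s/\e^2$ (from $\e < 2\sqrt{s/n}$) to reduce the claim to $\e^2 (n/s)\log(6/\e) \le C\log(2n/s)$; writing $\alpha := (n/s)\e^2 \in (0,4]$ and substituting $\e^2 = \alpha s/n$, elementary calculus shows the quantity $\alpha\log(6/\e) = \alpha\log 6 + (\alpha/2)\log(n/s) - (\alpha/2)\log\alpha$ is bounded by $O(1) + 2\log(n/s)$, which is comparable to $\log(2n/s)$. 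Picking a sufficiently large absolute constant $C$ then absorbs both pieces simultaneously.
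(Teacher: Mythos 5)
Your proof takes essentially the same route as the paper: the same split at $\e = 2\sqrt{s/n}$, the volumetric bound on $B_2^n$ at small scales, and the composition of Lemma~\ref{lem:sparse net} with Lemma~\ref{lem:entropy sparse} at large scales. The verification of the unified bound $C s \e^{-2}\log(2n/s)$ that you spell out is correct and is a useful addition, since the paper asserts it without comment; the only slight divergence is that you cover by points of $S_{n,t}$ rather than of $K_{n,s}$ itself, giving marginally different constants, which is immaterial and can be reconciled via the monotonicity property \eqref{eq:monotonicity} as the paper does.
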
 

\begin{proof}
\rv{
First note that $K_{n,s} \subset B_2^n$. Then the monotonicity property \eqref{eq:monotonicity} followed by
the volumetric estimate \eqref{eq:entropy ball} yield the first desired bound
$N(K_{n,s}, \e) \leq N(B_2^n, \e/2) \leq n \log(6/\e)$ for all $\e \in (0,1)$.
}

\rv{Next, suppose that $2 \sqrt{\frac{s}{n}} < \e <1$.  Then set $t := 4s/\e^2 \le n$. 
Lemma~\ref{lem:sparse net} states that $S_{n,t} \cap K_{n,s}$ is an $(\e/2)$-net of $K_{n,s}$. 
Furthermore, to find an $(\e/2)$-net of $S_{n,t}$, we use Lemma~\ref{lem:entropy sparse} for $\e/4$ and for $t$.}
Taking into account the monotonicity property \eqref{eq:monotonicity}, we see that 
there exists an $(\e/2)$-net $\NN$ of $S_{n,t} \cap K_{n,s}$ and such that 
$$
\log |\NN| \le t \log \Big( \frac{36 n}{\e t} \Big)
= \frac{4 s}{\e^2} \, \log \Big( \frac{9\e n}{s} \Big).
$$
It follows that $\NN$ is an $\e$-net of $K_{n,s}$, and its cardinality is as required. 
\end{proof}

\section{Random hyperplane tessellations}		\label{sec:tessellations}

In this section we prove a generalization of Theorem \ref{thm:general-noiseless}.
We consider a set $K \subseteq \R^n$ and a collection of $m$ random hyperplanes in $\R^n$, 
chosen independently and uniformly from the Haar measure. 
The resulting partition of $K$ by this collection of hyperplanes is 
called a {\em random tessellation} of $K$. 
The cells of the tessellation 
are formed by intersection of $K$ and the $m$ random half-spaces with particular orientations. 
The main interest in the theory of random tessellations is the typical shape of the cells. 

\begin{figure}[htp]		
  \centering \includegraphics[height=2.7cm]{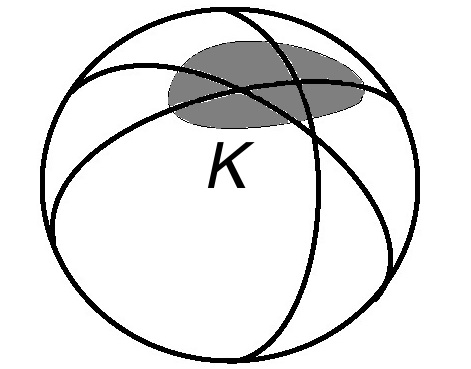} 
  \caption{Hyperplane tessellation of a subset $K$ of a sphere}
  \label{fig:tessellation}
\end{figure}

We shall study the situation where $K$ is a subset of the sphere $S^{n-1}$, see Figure~\ref{fig:tessellation}.
The particular example of $K = S^{n-1}$ is a natural model of random hyperplane tessellation in the sperical space $S^{n-1}$. 
The more classical and well studied model of random hyperplane tessellation is in Euclidean space $\R^n$, 
where the hyperlanes are allowed to be affine, see \cite{MS-tessellations} for the history of this field. 
The random hyperplane tessellations of the sphere is studied in particular in \cite{Miles-tessellations}.

Here we focus on the following question.
{\em How many random hyperplanes ensure that all the cells of the tessellation of $K$ have small diameter}
(such as $1/2$)?
For the purposes of this paper, we shall address this problem for a specific set, namely for 
$$
K = S^{n-1} \cap \sqrt{s} B_1^n = S^{n-1} \cap K_{n,s}.
$$
We shall prove that $m = O(s \log(n/s))$ hyperplanes suffice with high probability.
Our argument can be extended to more general sets $K$, but we defer generalizations to a later paper. 

\begin{theorem}[Random hyperplane tessellations]					\label{thm:tessellation}
  Let $s \le n$ and $m$ be positive integers. Consider the tessellation of the set 
  $K = S^{n-1} \cap \sqrt{s} B_1^n$ by $m$ random hyperplanes in $\R^n$
  chosen independently and uniformly from the Haar measure. 
  Let $\d \in (0,1)$, and assume that 
  $$
  m \ge C \d^{-5} s \log(2n/s). 
  $$
  Then, with probability at least $1-2\exp(-\d m)$, 
  all cells of the tessellation of $K$ have diameter at most $\d$.
\end{theorem}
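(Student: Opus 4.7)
The plan is to reduce the statement to the following dual form: with the claimed probability, for every $\vx, \vy \in K := S^{n-1} \cap \sqrt{s}B_1^n$ with $\|\vx - \vy\|_2 > \delta$, at least one of the $m$ random hyperplanes separates $\vx$ from $\vy$. The hyperplane normals may be taken as independent Gaussians $\va_1, \ldots, \va_m \sim N(0, I_n)$. For a fixed pair $\vp, \vq \in S^{n-1}$ with $\|\vp - \vq\|_2 \ge \delta/2$, the probability that $\va_i$ yields opposite signs on $\vp$ and $\vq$ equals $\arccos \langle \vp, \vq \rangle / \pi \ge c\delta$. To prepare for the approximation step, I would work with a margin $\tau > 0$: declare $\va_i$ to \emph{$\tau$-robustly separate} $(\vp,\vq)$ if the signs differ and both $|\langle \va_i,\vp\rangle|, |\langle \va_i,\vq\rangle| \ge \tau$. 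The Gaussian density cost of this margin is only $O(\tau)$, so the probability of $\tau$-robust separation remains $\gtrsim \delta$ provided $\tau \ll \delta$, and a Chernoff bound gives at least $c m \delta$ such separators with probability $1 - e^{-c' m \delta}$.

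Next, I would build an $\varepsilon$-net $\mathcal{N}$ of $K$ at some scale $\varepsilon \ll \delta$ using Lemma~\ref{lem:entropy compressible}, so that $\log|\mathcal{N}| \le C s \varepsilon^{-2} \log(2n/s)$. Taking a union bound over the $|\mathcal{N}|^2$ pairs gives, simultaneously for every pair $\vp, \vq \in \mathcal{N}$ with $\|\vp - \vq\|_2 \ge \delta/2$, at least $c m \delta$ hyperplanes $\tau$-robustly separating them, provided
\[
m \delta \;\gtrsim\; s \varepsilon^{-2} \log(2n/s).
\]

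To lift this from the net to all of $K$, I approximate arbitrary $\vx, \vy \in K$ (with $\|\vx - \vy\|_2 \ge \delta$) by nearest net points $\vp, \vq$, so that $\|\vp - \vq\|_2 \ge \delta - 2\varepsilon \ge \delta/2$. Any $\tau$-robust separator $\va_i$ of $(\vp,\vq)$ also separates $(\vx,\vy)$ as soon as $|\langle \va_i, \vx - \vp \rangle|, |\langle \va_i, \vy - \vq \rangle| < \tau$. It therefore suffices to show that only a small fraction, namely $o(m\delta)$, of hyperplanes get ``disqualified'' by the approximation, in the sense that $\sup_{\vu \in (K - \vp) \cap \varepsilon B_2^n} |\langle \va_i, \vu \rangle| > \tau$. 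Here the geometry of $K$ is crucial: by Lemma~\ref{lem:characterization} and the rescaling $\varepsilon B_2^n \cap \sqrt{s} B_1^n = \varepsilon\, K_{n, s/\varepsilon^2}$, the local Gaussian width of $(K - \vp) \cap \varepsilon B_2^n$ is bounded by $\min\bigl(\varepsilon \sqrt{n},\, \sqrt{s \log(n\varepsilon^2/s + 2)}\bigr)$. Combining Gaussian concentration of the supremum with a counting argument over $i = 1, \ldots, m$ controls the number of disqualified hyperplanes with the required probability. Setting $\tau \asymp \delta$ and choosing $\varepsilon$ polynomial in $\delta$ to balance the entropic constraint above against the approximation constraint is what yields the final requirement $m \gtrsim \delta^{-5} s \log(2n/s)$.

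The main obstacle is precisely this approximation step. The discontinuity of the sign function forces a margin $\tau \le c\delta$, and the approximation error $|\langle \va_i, \vu \rangle|$ must be controlled uniformly over the local neighborhood of every net point \emph{and} over a large fraction of the $m$ hyperplanes at once. A naive operator-norm bound $|\langle \va_i, \vu\rangle| \le \|\va_i\|_2 \|\vu\|_2$ introduces an unacceptable factor of $\sqrt{n}$; instead one must exploit the sparsity-flavored local geometry of $K$ through its local Gaussian width, so that the approximation error is controlled by $\sqrt{s \log(n/s)}$-type quantities rather than $\sqrt{n}$-type quantities. The delicate trade-off between entropic cost at scale $\varepsilon$, Gaussian-width cost of the local set, and margin $\tau$ is what produces the exponent $\delta^{-5}$.
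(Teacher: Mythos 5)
Your scaffolding matches the paper's: reduce to pairwise separation (Theorem~\ref{thm:separation}), prove robust separation with a margin for each pair of net points (the paper's Lemma~\ref{lem:separation hyperplane}), take a union bound over an $\e$-net whose size is controlled by Lemma~\ref{lem:entropy compressible}, and trade off the covering scale $\e$ against the margin to land at $m\gtrsim\delta^{-5}s\log(2n/s)$. The entropic constraint $m\delta\gtrsim s\e^{-2}\log(2n/s)$ and the margin $\tau\asymp\delta$ are also right. However, the way you propose to carry out the approximation step (the ``tail control'') does not work, and the fix is the paper's one genuinely non-obvious move.

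You declare hyperplane $i$ ``disqualified'' for a net point $\vp$ if
$\sup_{\vu\in(K-\vp)\cap\e B_2^n}|\langle\va_i,\vu\rangle|>\tau$, and propose to show via Gaussian concentration of this supremum plus a counting argument over $i$ that only a small fraction of the $m$ hyperplanes are disqualified. But for a fixed $i$, that supremum concentrates tightly around the Gaussian width of the local set $(K-\vp)\cap\e B_2^n\subset 2\sqrt{s}B_1^n\cap\e B_2^n$, which by your own estimate is of order $\min\bigl(\e\sqrt{n},\ \sqrt{s\log(n\e^2/s+2)}\bigr)$. In the parameter regime where the theorem is non-vacuous, this width is \emph{much larger} than $\tau\asymp\delta$ --- the $\sqrt{s}$-type bound cannot be pushed below $\delta$ without forcing $s\lesssim\delta^2$, and the $\e\sqrt{n}$ bound would force $\e\lesssim\delta/\sqrt n$, which blows up $\log|\NN_\e|$ far beyond what $m\delta$ can absorb. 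So for essentially every $i$, the supremum exceeds $\tau$ with probability close to $1$, and your counting argument concludes that nearly \emph{all} hyperplanes are disqualified, not a vanishing fraction. The step fails.

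The paper's resolution (Step 3 and Step 4 of Section~\ref{sec:tessellations}) is to abandon the per-hyperplane uniform control over all tails and instead prove the uniform $\ell_1$-average bound of Lemma~\ref{lem:L1 RIP}: $\sup_{\vx'\in K_{n,t}}\frac1m\sum_{i=1}^m|\langle\va_i,\vx'\rangle|\le 1$ with $t=4s/\e^2$. Then, for the \emph{specific} tail $\vx'$ attached to a given $\vx$, Markov's inequality in the index $i$ shows that $|\langle\va_i,\vx'\rangle|$ exceeds $O(1/\delta)$ for at most $O(\delta m)$ indices. After multiplying by $\e=c_0\delta^2$ the offending term $\e\langle\va_i,\vx'\rangle$ is $O(\delta)$ on the good indices, which the margin absorbs. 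The crucial point is that the set of ``good'' indices is permitted to depend on $\vx$ --- you don't need a single set of hyperplanes that handles every possible tail simultaneously, and indeed no such set exists. Replacing your $\sup_\vu$-per-$i$ disqualification criterion with this fixed-tail Markov argument on top of the uniform $\ell_1$ bound is exactly what closes the gap.
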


It is convenient to represent the random hyperplanes in Theorem~\ref{thm:tessellation} as 
$(\va_i)^\perp$, $i=1,\ldots,m$, where $\va_i$ are i.i.d. standard normal vectors in $\R^n$.
The claim that all cells of the tessellation of $K$ have diameter at most $\d$ can be restated in the following way. 
Every pair of points $\vx, \vy \in K$ satisfying $\|\vx-\vy\|_2 > \d$ is separated by at least one of the hyperplanes, 
so there exists $i \in [m]$ such that 
$$
\< \va_i, \vx\> > 0, \quad \< \va_i, \vy \> < 0.
$$
Theorem~\ref{thm:tessellation} is then a direct consequence of the following slightly stronger result. 

\begin{theorem}[Separation by a set of hyperplanes]					\label{thm:separation}
  Let $s \le n$ and $m$ be positive integers. Consider the set $K = S^{n-1} \cap \sqrt{s} B_1^n$ 
  and independent random vectors $\va_1, \ldots, \va_m \sim \NN(0, \Id)$ in $\R^n$.
  Let $\d \in (0,1)$, and assume that 
  $$
  m \ge C \d^{-5} s \log(2n/s). 
  $$
  Then, with probability at least $1-2\exp(-\d m)$, the following holds. 
  For every pair of points $\vx,\vy \in K$ satisfying $\|\vx-\vy\|_2 > \d$, 
  there is a set of at least $c \d m$ of the indices $i \in [m]$ that satisfy 
  $$
  \< \va_i, \vx \> > c\d, \quad \< \va_i, \vy \> < -c\d.
  $$
\end{theorem}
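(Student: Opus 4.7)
The plan is a two-scale covering argument: a union bound over an $\epsilon_0$-net of $K = S^{n-1} \cap \sqrt{s} B_1^n$, together with a uniform estimate that controls Gaussian inner products of perturbations within the net cells. By Lemma~\ref{lem:entropy compressible}, for any $\epsilon_0 \in (0,1)$ there is an $\epsilon_0$-net $\NN$ of $K$ with $\log|\NN| \leq C s \log(2n/s)/\epsilon_0^2$. I would first prove a stronger separation statement on net pairs, then upgrade it to the claimed separation on all of $K$ via approximation. The resolution $\epsilon_0$ will be tuned at the end; the right scale turns out to be $\epsilon_0 \sim \delta^2$, and this is precisely what produces the exponent $\delta^{-5}$.

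For the net step, fix $(\vx_0, \vy_0) \in \NN \times \NN$ with $\|\vx_0 - \vy_0\|_2 > \delta/2$. In the plane they span, a standard Gaussian row $\va_i$ projects to a 2D standard Gaussian, and a direct polar-coordinate computation shows that the event
\[
\{\langle \va_i, \vx_0\rangle > 3c\delta \text{ and } \langle \va_i, \vy_0\rangle < -3c\delta\}
\]
has probability at least $c_1 \delta$: the set of such directions is a wedge of angular width $\Theta(\|\vx_0 - \vy_0\|_2) = \Omega(\delta)$, and the thresholds $\pm 3c\delta$ shave only a constant factor if $c$ is small enough. By Chernoff's inequality, with probability at least $1-\exp(-c_2 \delta m)$ the number of such indices is at least $2 c\delta m$. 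Taking a union bound over $|\NN|^2$ pairs succeeds provided $c_2 \delta m \gtrsim s \log(2n/s)/\epsilon_0^2$; call this condition (I).

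Now let $\vx, \vy \in K$ with $\|\vx - \vy\|_2 > \delta$, and pick net approximations $\vx_0, \vy_0 \in \NN$ within $\epsilon_0$. Then $\|\vx_0 - \vy_0\|_2 > \delta/2$, so the net step yields $2 c\delta m$ indices with strong separation for $(\vx_0, \vy_0)$. To transfer these to $(\vx, \vy)$, I must discard indices on which the perturbation $\vu = \vx - \vx_0$ or $\vv = \vy - \vy_0$ satisfies $|\langle \va_i, \cdot \rangle| > 2c\delta$. The key observation is that $\vu \in K - K$ has $\|\vu\|_2 \leq \epsilon_0$ and $\|\vu\|_1 \leq 2\sqrt{s}$, so $\vu/(2\epsilon_0) \in K_{n, s/\epsilon_0^2}$ (the inflated effective sparsity is $s_{\mathrm{eff}} = s/\epsilon_0^2$). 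A restricted-isometry-type bound for the convex set $K_{n, s_{\mathrm{eff}}}$, which follows from sparse RIP of Gaussian matrices combined with the shelling decomposition in Lemma~\ref{lem:characterization}, gives uniformly over such $\vu$,
\[
\sum_{i=1}^m |\langle \va_i, \vu\rangle| \leq \sqrt{m}\, \|\mA \vu\|_2 \leq C m \|\vu\|_2 \leq C m \epsilon_0,
\]
with high probability provided $m \gtrsim s \log(2n/s)/\epsilon_0^2$; call this condition (II). Markov's inequality then bounds the number of spoiled indices by $C m \epsilon_0/\delta$.

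To finish, I choose $\epsilon_0 = c_3 \delta^2$ small enough that the spoiled count is at most $c\delta m / 4$ for each of $\vu$ and $\vv$. Condition (I) becomes $m \geq C s \log(2n/s)/\delta^5$, exactly the theorem's hypothesis, while condition (II) becomes the weaker $m \geq C s\log(2n/s)/\delta^4$. Subtracting the at most $c\delta m/2$ spoiled indices from the $2c\delta m$ strong-separation indices leaves at least $c\delta m$ indices with the desired properties. The main obstacle is the uniform perturbation estimate (II): the difference set $K - K$ is not a small scaled copy of $K_{n,s}$, because $\|\vu\|_1$ can remain as large as $2\sqrt{s}$ even when $\|\vu\|_2$ is tiny, forcing one to work with the inflated sparsity $s/\epsilon_0^2$. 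Balancing the $\epsilon_0$-dependence of the net cardinality against that of the spoiled-index count is what pins down $\epsilon_0 \sim \delta^2$ and yields the $\delta^{-5}$ scaling.
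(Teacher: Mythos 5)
Your proposal follows the paper's proof essentially step for step: an $\epsilon_0$-net of $K$ with $\log|\NN| \lesssim s\log(2n/s)/\epsilon_0^2$, the two-dimensional Gaussian wedge argument giving separation probability $\Omega(\delta)$ for a fixed pair of centers, Chernoff plus a union bound over net pairs, uniform control of the tails in $K_{n,\,\Theta(s/\epsilon_0^2)}$ via an RIP-type bound, Markov's inequality to discard the spoiled indices, and the tuning $\epsilon_0 \sim \delta^2$ that balances the two constraints and produces $\delta^{-5}$. The only cosmetic deviation is that you pass the tail sum through Cauchy--Schwarz and a sparse restricted-isometry constant plus the shelling decomposition, whereas the paper packages exactly those same ingredients into its uniform $\ell_1$ concentration lemma (Lemma~\ref{lem:L1 RIP}); note, though, that the intermediate inequality $\|\mA\vu\|_2 \le C\sqrt{m}\,\|\vu\|_2$ as written is slightly too strong (what the shelling/RIP gives is $\|\mA\vu\|_2\le C\sqrt{m}\,\epsilon_0$ uniformly over $\vu \in 2\epsilon_0 K_{n,s/\epsilon_0^2}$), but since you only use the resulting bound $Cm\epsilon_0$ this does not affect the argument.
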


We will prove Theorem~\ref{thm:separation} by the following covering argument, 
which will allow us to uniformly handle all pairs $\vx, \vy \in K$ satisfying $\|\vx-\vy\|_2 > \d$. 
We choose an $\e$-net $\NN_\e$ of $K$ as in Lemma~\ref{lem:entropy compressible}.
We decompose the vector $\vx = \vx_0 + \vx'$ where $\vx_0 \in \NN_\e$ is a ``center''
and $\vx' \in \e B_2^n \cap K$ is a ``tail'', and we do similarly for $\vy$.
An elementary probabilistic argument and a union bound will allow us to nicely separate each pair of 
centers $\vx_0, \vy_0 \in \NN_\e$ satisfying $\|\vx_0-\vy_0\|_2 > \d$ by $\Omega(m)$ hyperplanes. 
(Specifically, it will follow that $\< \va_i, \vx_0 \> > c\d$, $\< \va_i, \vy_0 \> < -c\d$ for at least $c\d m$ of the indices $i \in [m]$.)

Furthermore, the tails $\vx', \vy' \in \e B_2^n \cap \sqrt{s} B_1^n$ can be uniformly controlled using Lemma~\ref{lem:L1 RIP},
which implies that all tails are in a good position with respect to $m-o(m)$ hyperplanes. 
(Specifically, for small $\e$ one can deduce that $|\< \va_i, \vx' \> | < c\d/2$, $|\< \va_i, \vy' \> | < c\d/2$ 
for at least $m-c\d m/2$ of the indices $i \in [m]$.)
Putting the centers and the tails together, we shall conclude that $\vx$ and $\vy$ are 
separated at least $\Omega(m) + m-o(m) > \Omega(m)$ hyperplanes, as required.

\medskip

Now we present the full proof of Theorem~\ref{thm:separation}. 

\subsection{Step 1: Decomposition into centers and tails}	\label{s:decomposition}

Let $\e \in (0,1)$ be a number to be determined later. 
Let $\NN_\e$ be an $\e$-net of $K$. Since $K \subseteq K_{n,s}$, Lemma~\ref{lem:entropy compressible}
along with monotonicity property of entropy \eqref{eq:monotonicity} guarantee that $\NN_\e$ can be chosen so that 
\begin{equation}							\label{eq:net size}
\log |\NN_\e| \le \frac{C s}{\e^2} \, \log \Big( \frac{2 n}{s} \Big).
\end{equation}

\begin{lemma}[Decomposition into centers and tails]		\label{lem:decomposition}
  Let $t = 4s/\e^2$. Then every vector $\vx \in K$ can be represented as 
  \begin{equation}							\label{eq:decomposition}
  \vx = \vx_0 + \e \vx'
  \end{equation}
  where $\vx_0 \in \NN_\e$, $\vx' \in K_{n,t}$.
\end{lemma}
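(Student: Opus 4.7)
The plan is straightforward and essentially reduces to using the net property together with the triangle inequality for the $\ell_1$ norm, so the main work will be bookkeeping rather than any real obstacle.

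First, I would invoke the net property of $\NN_\e$. Since $\NN_\e \subseteq K$ is an $\e$-net of $K$, for any $\vx \in K$ there exists a point $\vx_0 \in \NN_\e$ with $\|\vx-\vx_0\|_2 \le \e$. Define
\[
\vx' := \frac{\vx - \vx_0}{\e},
\]
which immediately yields the decomposition $\vx = \vx_0 + \e\vx'$.

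Next, I would verify that $\vx' \in K_{n,t} = B_2^n \cap \sqrt{t}\,B_1^n$. The $\ell_2$ bound $\|\vx'\|_2 \le 1$ is exactly the definition of an $\e$-net, since $\|\vx - \vx_0\|_2 \le \e$. For the $\ell_1$ bound, I would use the triangle inequality together with the fact that both $\vx$ and $\vx_0$ lie in $K \subseteq \sqrt{s}\,B_1^n$:
\[
\|\vx'\|_1 = \frac{\|\vx - \vx_0\|_1}{\e} \le \frac{\|\vx\|_1 + \|\vx_0\|_1}{\e} \le \frac{2\sqrt{s}}{\e} = \sqrt{t},
\]
where the last equality uses the definition $t = 4s/\e^2$. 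This shows $\vx' \in \sqrt{t}\,B_1^n$, and combined with the $\ell_2$ bound gives $\vx' \in K_{n,t}$, completing the proof.

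There is no real obstacle here; the choice $t = 4s/\e^2$ is precisely calibrated so that the factor of $2$ coming from the triangle inequality applied to two vectors in $\sqrt{s}\,B_1^n$ (divided by the scaling $\e$) yields exactly the bound $\sqrt{t}$. The only subtle point is that $\NN_\e$ must be chosen as a subset of $K$ (so that $\vx_0 \in K \subseteq \sqrt{s}\,B_1^n$), which is the convention used throughout Section~\ref{sec:geometry}.
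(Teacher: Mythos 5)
Your proof is correct and follows the same route as the paper: extract $\vx_0$ from the net, set $\vx' = (\vx-\vx_0)/\e$, read off the $\ell_2$ bound from the net property, and get the $\ell_1$ bound from the triangle inequality using $\vx, \vx_0 \in \sqrt{s}B_1^n$. You also correctly flag the (necessary) convention that $\NN_\e \subseteq K$, which the paper uses implicitly.
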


\begin{proof}
Since $\NN_\e$ is an $\e$-net of $K$, representation \eqref{eq:decomposition} holds for some
$\vx' \in B_2^n$. Since $K_{n,t} = B_2^n \cap \sqrt{t} B_1^n$, it remains to check that $\vx' \in \sqrt{t} B_1^n$.
Note that $\vx \in K \subset \sqrt{s}B_1^n$ and $\vx_0 \in \NN_\e \subset K \subset \sqrt{s}B_1^n$. By the triangle 
inequality this implies that $\e \vx' = \vx-\vx_0 \in 2\sqrt{s}B_1^n$.
Thus $\vx' \in (2\sqrt{s}/\e)B_1^n = \sqrt{t}B_1^n$, as claimed. 
\end{proof}

\subsection{Step 2: Separation of the centers}

Our next task is to separate the centers $\vx_0$, $\vy_0$ of each pair of points $\vx,\vy \in K$ 
that are far apart by $\Omega(m)$ hyperplanes. 
For a fixed pair of points and for one hyperplane, it is easy to estimate the probability of a nice separation.

\begin{lemma}[Separation by one hyperplane]  	\label{lem:separation hyperplane}
  Let $\vx, \vy \in S^{n-1}$ and assume that $\|\vx-\vy\|_2 \ge \d$ for some $\d>0$.
  Let $\va \sim \NN(0, \Id)$. Then for $\d_0 = \d/12$ we have
  \[
  \Pr{\< \va, \vx\> > \d_0, \; \< \va, \vy \> < -\d_0} \geq \d_0.
  \]
\end{lemma}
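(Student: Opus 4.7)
The plan is to prove the inequality in two steps: first compute the probability of the un-thresholded event $\{\langle\va,\vx\rangle>0,\,\langle\va,\vy\rangle<0\}$ exactly, and then pay for the $\pm\delta_0$ margins with a one-dimensional Gaussian density estimate.

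For the first step I would invoke the classical Gaussian hyperplane-rounding identity. By rotational invariance only the projection of $\va$ onto the two-dimensional plane spanned by $\vx$ and $\vy$ is relevant; this projection is a standard 2D Gaussian, whose direction is uniform on the unit circle. In that plane the intersection of the two half-planes $\langle\va,\vx\rangle>0$ and $\langle\va,\vy\rangle<0$ is an angular sector of width exactly $\theta$, where $\theta\in(0,\pi]$ is the angle between $\vx$ and $\vy$. Hence
\[
  \Pr{\langle\va,\vx\rangle > 0,\;\langle\va,\vy\rangle < 0} \;=\; \frac{\theta}{2\pi}.
\]
The identity $\|\vx-\vy\|_2 = 2\sin(\theta/2)$ together with $\sin y \le y$ turns the hypothesis $\|\vx-\vy\|_2 \ge \delta$ into $\theta \ge \delta$, so this probability is at least $\delta/(2\pi)$.

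For the margin correction I would observe that the symmetric difference between $\{\langle\va,\vx\rangle>0,\,\langle\va,\vy\rangle<0\}$ and the target event $\{\langle\va,\vx\rangle>\delta_0,\,\langle\va,\vy\rangle<-\delta_0\}$ is contained in $\{\langle\va,\vx\rangle\in(0,\delta_0]\}\cup\{\langle\va,\vy\rangle\in[-\delta_0,0)\}$. Since $\|\vx\|_2=\|\vy\|_2=1$, each of $\langle\va,\vx\rangle$ and $\langle\va,\vy\rangle$ is $\NN(0,1)$, and bounding the Gaussian density by $1/\sqrt{2\pi}$ shows that each of these strips has measure at most $\delta_0/\sqrt{2\pi}$. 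Subtracting gives
\[
  \Pr{\langle\va,\vx\rangle > \delta_0,\;\langle\va,\vy\rangle < -\delta_0}
  \;\ge\; \frac{\delta}{2\pi} - \frac{2\delta_0}{\sqrt{2\pi}}.
\]
Substituting $\delta_0=\delta/12$ makes the right side equal to $\delta\bigl(\tfrac{1}{2\pi}-\tfrac{1}{6\sqrt{2\pi}}\bigr)$, which a direct numerical check shows exceeds $\delta/12 = \delta_0$. There is no real obstacle; the entire content of the lemma is that the constant $12$ was chosen just large enough to absorb the two margin losses of size $\delta_0/\sqrt{2\pi}$.
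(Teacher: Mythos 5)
Your proposal is correct and follows essentially the same route as the paper: you compute the exact probability $\theta/(2\pi)$ of the unthresholded separation event (the paper cites this as the Goemans--Williamson lemma; you rederive it via rotation to the plane of $\vx,\vy$), relate it to $\|\vx-\vy\|_2$ via $\theta \ge 2\sin(\theta/2) = \|\vx-\vy\|_2$ (the paper phrases this as geodesic distance dominating chordal distance), and then subtract the two Gaussian strips of width $\delta_0$, each bounded by $\delta_0/\sqrt{2\pi}$. The only cosmetic difference is that the paper arranges this as a union bound on the complement while you subtract the difference $A\setminus B$ directly; since $B\subset A$, these are algebraically identical.
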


\begin{proof}
Note that
\begin{align*}
  \Pr{\< \va, \vx\> > \d_0, \; \< \va, \vy \> < -\d_0} &= \Pr{\< \va, \vx\> > 0 \;\mathrm{ and }\; \< \va, \vy\> < 0 \;\mathrm{ and }\; \< \va,\vx\> \notin (0, \d_0] \;\mathrm{ and }\;  \< \va, \vy \> \notin [-\d_0, 0)}\\
& \geq 1 - \Pr{\< \va, \vx\> \leq 0 \:\mathrm{or} \; \< \va, \vy\> \geq 0} - \Pr{\< \va,\vx\> \in (0, \d_0]}- \Pr{\< \va, \vy \> \in [-\d_0, 0)}.
\end{align*}
The inequality above follows by the union bound.  Now, since $\< \va, \vx \> \sim \NN(0,1)$ we have
\[
\Pr{\< \va, \vx \> \in (0, \d_0]} \leq \frac{\delta_0}{\sqrt{2 \pi}}
\quad \text{and} \quad
\Pr{\< \va, \vy \> \in [-\d_0, 0)} \leq \frac{\delta_0}{\sqrt{2 \pi}}.
\]
Also, denoting the geodesic distance in $S^{n-1}$ by $d(\cdot, \cdot)$ it is not hard to show that 
\[
\Pr{\< \va, \vx\> \leq 0 \:\mathrm{or} \; \< \va, \vy\> \geq 0} = 1 - \frac{d(\vx,\vy)}{2\pi}
\le 1 - \frac{\|\vx-\vy\|_2}{2\pi} \le 1 - \frac{\d}{2\pi}
\]
(see \cite[Lemma 3.2]{Goemans1995}).  Thus
\[
\Pr{\< \va, \vx\> > \d_0, \; \< \va, \vy \> < -\d_0} 
\ge \frac{\d}{2\pi} - \frac{2 \delta_0}{\sqrt{2 \pi}} \geq \d_0
\]
as claimed.
\end{proof}

Now we will pay attention to the number of hyperplanes that nicely separate a given pair of points. 

\begin{definition}[Separating set]
  Let $\d_0 \in (0,1)$. The separating index set of a pair of points $\vx,\vy \in S^{n-1}$ is defined as 
  $$
  I_{\d_0}(\vx,\vy) := \left\{ i \in [m]: \; \< \va_i, \vx\> > \d_0, \; \< \va_i, \vy \> < -\d_0 \right\}.
  $$
\end{definition}

The cardinality $|I_{\d_0}(\vx,\vy)|$ is a binomial random variable, which is the sum of $m$ indicator
functions of the independent events $\{\< \va_i, \vx\> > \d_0, \; \< \va_i, \vy \> < -\d_0\}$. The probability of each such event 
can be estimated using Lemma~\ref{lem:separation hyperplane}.
Indeed, suppose $\|\vx-\vy\|_2 \ge \d$ for some $\d>0$, and let $\d_0 = \d/12$.
Then the probability of each of the events above is at least $\d_0$. 
Then $|I_{\d_0}(\vx,\vy)| \sim \text{Binomial}(m,p)$ with $p > \d_0$. 
A standard deviation inequality (e.g. \cite[Theorem~A.1.13]{Alon-Spencer}) yields
\begin{equation}							\label{eq:separating set deviation}
\Pr{ |I_{\d_0}(\vx,\vy)| < \d_0 m/2 } \le e^{-\d_0 m/8}.
\end{equation}

Now we take a union bound over pairs of centers in the net $\NN_\e$ that was chosen in the beginning 
of Section~\ref{s:decomposition}. 

\begin{lemma}[Separation of the centers]			\label{lem:separation centers}
  Let $\e, \d \in (0,1)$, and let $\NN_\e$ be an $\e$-net of $K$ whose cardinality satisfies \eqref{eq:net size}.
  Assume that 
  \begin{equation}							\label{eq:centers m}
  m \ge \frac{C_1s}{\e^2\d} \log \Big( \frac{2 n}{s} \Big).
  \end{equation}  
  Then, with probability at least $1-\exp(-\d m/100)$, the following event holds:
  \begin{equation}							\label{eq:separation centers}
  \text{For every } \vx_0, \vy_0 \in \NN_\e \text{ such that } \|\vx_0-\vy_0\|_2 > \d, 
  \text{ one has } |I_{\d/12}(\vx_0,\vy_0)| \ge \d m/24.
  \end{equation}
\end{lemma}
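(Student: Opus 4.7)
The plan is a direct union bound over pairs of centers, feeding in the single-pair tail estimate \eqref{eq:separating set deviation}. Fix $\d_0 = \d/12$. For any fixed pair $\vx_0, \vy_0 \in \NN_\e$ with $\|\vx_0 - \vy_0\|_2 > \d$, the cardinality $|I_{\d_0}(\vx_0,\vy_0)|$ is the sum of $m$ i.i.d.\ Bernoulli indicators, whose individual success probability is at least $\d_0$ by Lemma~\ref{lem:separation hyperplane}. So \eqref{eq:separating set deviation} gives
$$
\Pr{|I_{\d_0}(\vx_0,\vy_0)| < \d m/24} = \Pr{|I_{\d_0}(\vx_0,\vy_0)| < \d_0 m/2} \le \exp(-\d_0 m/8) = \exp(-\d m/96).
$$

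Next I would union bound this inequality over all pairs of points in $\NN_\e$ (there are at most $|\NN_\e|^2$ of them; we just discard the pairs with $\|\vx_0-\vy_0\|_2 \le \d$). Using the entropy bound \eqref{eq:net size},
$$
2\log|\NN_\e| \le \frac{2Cs}{\e^2}\log\bigl(2n/s\bigr).
$$
The failure probability of \eqref{eq:separation centers} is therefore at most
$$
\exp\!\Bigl( 2\log|\NN_\e| - \d m/96 \Bigr) \le \exp\!\Bigl( \frac{2Cs}{\e^2}\log(2n/s) - \frac{\d m}{96} \Bigr).
$$

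To get the claimed bound $\exp(-\d m/100)$, it suffices to have $2\log|\NN_\e| \le \d m/96 - \d m/100 = \d m/2400$, i.e.\
$$
m \ge \frac{4800\,Cs}{\e^2\,\d}\,\log(2n/s),
$$
which is exactly the hypothesis \eqref{eq:centers m} with $C_1 := 4800\,C$. This closes the argument.

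There is no real obstacle here: the only quantitative point to watch is the matching of the various constants ($1/12$ from Lemma~\ref{lem:separation hyperplane}, the $1/8$ from the binomial deviation bound, and the $1/100$ in the target exponent), which determines the absolute constant $C_1$ in \eqref{eq:centers m}. The entropy estimate enters linearly in the exponent and is easily absorbed, which is why the required $m$ scales as $\e^{-2}\d^{-1} s \log(2n/s)$ rather than with any worse dependence.
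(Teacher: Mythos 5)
Your argument is correct and is essentially identical to the paper's: both fix $\d_0 = \d/12$, invoke the single-pair binomial deviation bound \eqref{eq:separating set deviation} to get $\exp(-\d m/96)$, union bound over the at most $|\NN_\e|^2$ pairs using the entropy bound \eqref{eq:net size}, and absorb the discrepancy between $1/96$ and $1/100$ into the choice of $C_1$. The only difference is that you make the constant $C_1 = 4800\,C$ explicit rather than leaving it as ``sufficiently large,'' which is a minor refinement of presentation, not of substance.
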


\begin{proof}
For a fixed pair $\vx_0, \vy_0$ as above, we can rewrite \eqref{eq:separating set deviation} as 
$$
\Pr{ |I_{\d/12}(\vx_0,\vy_0)| < \d m/24 } \le e^{-\d m/96}.
$$
A union bound over all pairs $\vx_0, \vy_0$ implies that the event in \eqref{eq:separation centers}
fails with probability at most 
$$
|\NN_\e|^2 \cdot e^{-\d m/96}.
$$
By \eqref{eq:net size} and \eqref{eq:centers m}, this quantity is further bounded by
$$
\exp \Big[ \frac{C s}{\e^2} \, \log \Big( \frac{2 n}{s} \Big) - \frac{\d m}{96} \Big]
\le \exp(-\d m/100)
$$
provided the absolute constant $C_1$ is chosen sufficiently large. The proof is complete.
\end{proof}

\subsection{Step 3: Control of the tails}

Now we provide a uniform control of the tails $\vx' \in K_{n,t}$ that arise from the decomposition given in
Lemma~\ref{lem:decomposition}. 
The next result is a direct consequence of Lemma~\ref{lem:L1 RIP}.

\begin{lemma}[Control of the tails]			\label{lem:tails}
  Let $t \ge 1$ and let $\va_1,\ldots,\va_m \sim \NN(0,\Id)$ be independent random vectors in $\R^n$.
  Assume that 
  \begin{equation}							\label{eq:tails m}
  m \ge Ct \log(2n/t).
  \end{equation}  
  Then, with probability at least $1 - 2 \exp(-cm)$, the following event holds: 
  $$
  \sup_{\vx' \in K_{n,t}} \frac{1}{m} \sum_{i=1}^m \abs{\< \va_i, \vx'\>} \le 1.
  $$
\end{lemma}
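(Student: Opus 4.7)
The plan is to deduce Lemma~\ref{lem:tails} directly from Lemma~\ref{lem:L1 RIP} in the Appendix, which provides a uniform $\ell_1$-deviation bound for Gaussian random matrices over the effectively sparse set. Specifically, that lemma says that whenever $m \ge C t \log(2n/t)$, with probability at least $1 - 2\exp(-cm)$,
\[
\sup_{\vx \in K_{n,t}} \left| \frac{1}{m}\sum_{i=1}^m |\<\va_i,\vx\>| - \sqrt{2/\pi}\,\twonorm{\vx} \right| \le \eta,
\]
where $\eta$ is any prescribed small absolute constant, achieved by enlarging the hidden constant $C$ in the lower bound on $m$.

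Given this estimate, Lemma~\ref{lem:tails} follows in one line. Since $K_{n,t} \subset B_2^n$, every $\vx' \in K_{n,t}$ satisfies $\twonorm{\vx'} \le 1$, and therefore
\[
\frac{1}{m}\sum_{i=1}^m |\<\va_i,\vx'\>| \le \sqrt{2/\pi}\,\twonorm{\vx'} + \eta \le \sqrt{2/\pi} + \eta.
\]
Choosing $\eta \le 1 - \sqrt{2/\pi}$ (a fixed positive constant) gives the desired bound by $1$; this only requires enlarging the absolute constant $C$ in the hypothesis on $m$. Note that using $\twonorm{\vx'} \le 1$ rather than the sharp value $\sqrt{2/\pi}\,\twonorm{\vx'}$ is what makes this a genuine ``specialization'' of the $\ell_1$-RIP: we discard all information about the exact mean and content ourselves with an absolute upper bound.

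The substantive work is hidden inside Lemma~\ref{lem:L1 RIP}, which is where the main obstacle lies. To establish the uniform deviation over $K_{n,t}$ one combines the subgaussian concentration of increments of the empirical process $\vx \mapsto \frac{1}{m}\sum_i(|\<\va_i,\vx\>| - \sqrt{2/\pi}\,\twonorm{\vx})$ with a chaining bound (Dudley's integral suffices) driven by the entropy estimate $\log N(K_{n,t},\e) \lesssim (t/\e^2)\log(2n/t)$ from Lemma~\ref{lem:entropy compressible}. Integrating $\e \mapsto \sqrt{\log N(K_{n,t},\e)/m}$ yields a deviation of order $\sqrt{(t/m)\log(2n/t)}$, which is below any prescribed $\eta$ once $m \ge C(\eta) t \log(2n/t)$. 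Once that structural lemma is established, the passage to Lemma~\ref{lem:tails} presents no additional difficulty.
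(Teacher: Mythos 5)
Your proposal is correct and follows exactly the route the paper intends: the paper itself introduces Lemma~\ref{lem:tails} with the one-line remark that it is ``a direct consequence of Lemma~\ref{lem:L1 RIP},'' and you have simply filled in that derivation by applying the uniform concentration bound with sparsity parameter $t$ and a fixed deviation level $\eta < 1 - \sqrt{2/\pi}$, absorbing $\eta$ into the absolute constants in the hypothesis on $m$ and in the failure probability.

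One small point worth flagging: Lemma~\ref{lem:L1 RIP} as stated in the appendix takes the supremum over $K_{n,s}\cap S^{n-1}$ (unit-norm vectors), whereas you quote it with the supremum over all of $K_{n,t}$ and the mean $\sqrt{2/\pi}\,\twonorm{\vx}$. These are not trivially interchangeable, since normalizing a vector $\vx'\in K_{n,t}$ with $\twonorm{\vx'}<1$ can push it out of $K_{n,t}$ (the $\ell_1$ constraint may become binding first), and the crude reduction via Lemma~\ref{lem:characterization} costs a factor of~$2$, which is too lossy here because $2\sqrt{2/\pi}>1$. The version you quote is nonetheless the one that the appendix's covering argument actually establishes---one simply takes the net over $K_{n,s}$ rather than $K_{n,s}\cap S^{n-1}$ (the entropy bound is identical) and invokes Lemma~\ref{lem:concentration} with the general factor $\twonorm{\bar{\vx}}$ rather than the normalized value $1$---so the argument is sound, but it is cleaner to cite the proof of Lemma~\ref{lem:L1 RIP} rather than its statement.
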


\subsection{Step 4: Putting the centers and tails together}

Let $\e = c_0 \d^2$ for a sufficiently small absolute constant $c_0>0$. 
To control the tails, we choose an $\e$-net $\NN_\e$ of $K$ as in Lemma~\ref{lem:separation centers},
and we shall apply this lemma with $\d/2$ instead of $\d$. 
Note that requirement \eqref{eq:centers m} becomes 
$$
m \ge C_2 \d^{-5} s \log \Big( \frac{2 n}{s} \Big),
$$
and it is satisfied by the assumption of Theorem~\ref{thm:separation}, for a sufficiently large absolute constant $C$.
So Lemma~\ref{lem:separation centers} yields that with probability at least $1-\exp(-\d m/200)$, 
the following separation of centers holds:
\begin{equation}							\label{eq:centers}
\text{For every } \vx_0, \vy_0 \in \NN_\e \text{ such that } \|\vx_0-\vy_0\|_2 > \d/2, 
\text{ one has } |I_{\d/24}(\vx_0,\vy_0)| \ge \d m/48.
\end{equation}

To control the tails, we choose $t = 4s/\e^2 \sim s/\d^4$ as in Decomposition Lemma~\ref{lem:decomposition},
and we shall apply Lemma~\ref{lem:tails}. Note that requirement \eqref{eq:tails m} becomes
$$
m \ge C_3 \d^{-4} s \log \Big( \frac{C_3 n}{s} \Big),
$$
and it is satisfied by the assumption of Theorem~\ref{thm:separation}, for a sufficiently large absolute constant $C$.
So Lemma~\ref{lem:tails} yields that with probability at least $1 - 2 \exp(-cm)$, the following control of tails holds: 
\begin{equation}							\label{eq:tails}
\text{For every } \vx' \in K_{n,t}, \text{ one has } \frac{1}{m} \sum_{i=1}^m \abs{\< \va_i, \vx'\>} \le 1.
\end{equation}

Now we combine the centers and the tails. With probability at least $1 - 2 \exp(-c \d m)$, both events 
\eqref{eq:centers} and \eqref{eq:tails} hold. Suppose both these events indeed hold, and consider a pair of vectors
$\vx,\vy \in K$ as in the assumption, so $\|\vx-\vy\|_2 > \d$. We decompose these vectors according to 
Lemma~\ref{lem:decomposition}: 
\begin{equation}							\label{eq:decomposition x y}
\vx = \vx_0 + \e x', \quad \vy = \vy_0 + \e \vy'
\end{equation}
where $\vx_0, \vy_0 \in \NN_\e$ and $\vx', \vy' \in K_{n,t}$. 
By the triangle inequality and the choice of $\e$, the centers are far apart: 
$$
\|\vx_0 - \vy_0\|_2 \ge \|\vx - \vy\|_2 - 2\e 
> \d - 2\e = \d - 2c_0\d^2 \ge \d/2. 
$$
Then event \eqref{eq:centers} implies that the separating set 
\begin{equation}							\label{eq:I}
I_0:= I_{\d/24}(\vx,\vy) \text{ satisfies } |I_0| \ge \d m/48.
\end{equation}

Furthermore, using \eqref{eq:tails} for the tails $\vx'$ and $\vy'$ we see that 
$$
\frac{1}{m} \sum_{i=1}^m \abs{\< \va_i, \vx'\>} + \frac{1}{m} \sum_{i=1}^m \abs{\< \va_i, \vy'\>}
\le 2. 
$$
By Markov's inequality, the set 
$$
I' := \left\{ i \in [m]: \; \abs{\< \va_i, \vx'\>} + \abs{\< \va_i, \vy'\>} \le \frac{192}{\d} \right\}
\text{ satisfies } |(I')^c| \le \frac{\d m}{96}.
$$

We claim that 
$$
I: = I_0 \cap I'
$$
is a set of indices $i$ that satisfies the conclusion of Theorem~\ref{thm:separation}. 
Indeed, the number of indices in $I$ is as required since 
$$
|I| \ge |I_0| - |(I')^c| \ge \frac{\d m}{48} - \frac{\d m}{96} = \frac{\d m}{96}.
$$
Further, let us fix $i \in I$. Using decomposition \eqref{eq:decomposition x y} we can write
$$
\< \va_i, \vx\> = \< \va_i, \vx_0\> + \e \< \va_i, \vx'\> .
$$
Since $i \in I \subseteq I_0 = I_{\d/24}(\vx,\vy)$, we have $\< \va_i, \vx_0\> > \d/24$, while 
from $i \in I \subseteq I'$ we obtain $\< \va_i, \vx'\> \ge -192/\d$. Thus
$$
\< \va_i, \vx\> > \frac{\d}{24} - \frac{192\e}{\d} \ge \frac{\d}{30},
$$
where the last estimate follows by the choice of $\e = c_0 \d^2$ for a sufficiently small absolute 
constant $c_0>0$.
In a similar way one can show that 
$$
\< \va_i, \vy\> < -\frac{\d}{24} + \frac{192\e}{\d} \le -\frac{\d}{30}.
$$
This completes the proof of Theorem~\ref{thm:separation}.
\qed

\section{Effective sparsity of solutions}		\label{sec:effective sparsity}

In this section we prove Theorem~\ref{thm:effective sparsity} about the effective sparsity 
of the solution of the convex optimization problem \eqref{eq:convex program}.
Our proof consists of two steps -- a lower bound for $\|\hat{\vx}\|_2$ proved in Lemma~\ref{lem:min-l2} below, 
and an upper bound on $\|\vx\|_2$ which we can deduce from Lemma~\ref{lem:L1 RIP} in the Appendix. 

\begin{lemma}[Euclidean norm of solutions]				\label{lem:min-l2}
  Let $n, m > 0$.
  Then, with probability at least $1 - C \exp(-c m \log (2n/m + 2m/n))$, 
  the following holds uniformly for all signals $\vx \in \R^n$. 
  Let $\vy = \sign(\mA\vx)$. 
  Then the solution $\hat{\vx}$ of the convex minimization program \eqref{eq:convex program}
  satisfies  
  $$
  \twonorm{\hat{\vx}} \geq c/\sqrt{\log(2n/m + 2m/n)}.
  $$
\end{lemma}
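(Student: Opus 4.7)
The plan is to exploit the linear-programming structure of \eqref{eq:convex program}: the solution $\hat{\vx}$ is (WLOG) a vertex of the feasible polytope, and there are only combinatorially many candidate vertices to control. First, rewrite \eqref{eq:convex program} as an LP in variables $(\vx', \vu) \in \R^{2n}$: minimize $\sum_j u_j$ subject to $-u_j \leq x'_j \leq u_j$, $y_i \langle \va_i, \vx'\rangle \geq 0$, and $\sum_i y_i \langle \va_i, \vx'\rangle = m$. The feasible polytope is $(2n-1)$-dimensional, and at an optimum $\hat{u}_j = |\hat{x}'_j|$. Letting $T = \supp(\hat{\vx})$ and $k = |T|$, one checks that exactly $2n - k$ of the ``absolute value'' inequalities are active, so at least $k - 1$ of the sign constraints $\langle \va_i, \hat{\vx}\rangle = 0$ must be active; call this set of indices $S \subseteq [m]$. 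In particular $k \leq m+1$, and generically $|S| = k - 1$.

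Second, for each pair $(T, S)$ with $|T| = k$ and $|S| = k-1$, the $n - 1$ linear equations $\{x'_j = 0 : j \notin T\} \cup \{\langle \va_i, \vx'\rangle = 0 : i \in S\}$ cut out, almost surely, a one-dimensional subspace of $\R^n$ spanned by a unit vector $\vv^{T,S}$ that depends only on $T$ and $\{\va_i : i \in S\}$. Writing $\hat{\vx} = c\,\vv^{T,S}$, the equality constraint together with the sign constraints (which force $y_i = \sign\langle \va_i, \hat{\vx}\rangle$) yields
\[
m \;=\; \sum_i y_i \langle \va_i, c\,\vv^{T,S}\rangle \;=\; |c|\sum_i |\langle \va_i, \vv^{T,S}\rangle| \;=\; |c|\,\|\mA \vv^{T,S}\|_1,
\]
so $\|\hat{\vx}\|_2 = |c| = m/\|\mA\vv^{T,S}\|_1$. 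The problem therefore reduces to upper-bounding $\|\mA\vv^{T,S}\|_1$ uniformly over $(T, S)$.

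Third, the crucial independence: $\vv^{T,S}$ is built from $\{\va_i : i \in S\}$ alone and is orthogonal to each such $\va_i$, so
\[
\|\mA\vv^{T,S}\|_1 \;=\; \sum_{i \notin S}|\langle \va_i, \vv^{T,S}\rangle|,
\]
a sum of at most $m$ i.i.d.\ half-normals conditional on $\vv^{T,S}$. Standard sub-Gaussian concentration then gives, for each fixed $(T,S)$ and each $\lambda \geq 2$, the tail bound $\P\{\|\mA\vv^{T,S}\|_1 \geq \lambda m\} \leq \exp(-c\lambda^2 m)$. A union bound requires the count
\[
\log \sum_{k=1}^{\min(n,\, m+1)} \binom{n}{k}\binom{m}{k-1} \;\leq\; C\,m\,\log\!\bigl(2n/m + 2m/n\bigr),
\]
which one verifies by splitting into the regimes $n \geq m$ and $n < m$ (in each, the sum is dominated by $k \asymp \min(n,m+1)$, and the estimate $\binom{n}{k} \leq (en/k)^k$ gives the stated bound in the appropriate symmetric form). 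Choosing $\lambda^2 = C'\log(2n/m + 2m/n)$ with $C'$ a sufficiently large absolute constant, the union bound yields $\|\mA\vv^{T,S}\|_1 \leq \lambda m$ simultaneously for all $(T, S)$, with probability at least $1 - C \exp(-c\,m\log(2n/m + 2m/n))$. Combined with the identity $\|\hat{\vx}\|_2 = m/\|\mA\vv^{T,S}\|_1$, this gives $\|\hat{\vx}\|_2 \geq 1/\lambda = c/\sqrt{\log(2n/m + 2m/n)}$. Uniformity over signals $\vx$ comes for free, since the family $\{\vv^{T,S}\}$ depends only on $\mA$ and not on $\vy$.

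The main obstacle will be the LP bookkeeping in Step~1: verifying that a generic optimal vertex is pinned down by exactly $k-1$ orthogonality constraints (so that the enumeration by pairs $(T, S)$ with $|S| = k-1$ captures every candidate), and correctly handling degenerate vertices where more constraints are active. Once this parametrization is in place, the remaining ingredients---the independence of $\vv^{T,S}$ from $\{\va_i : i \notin S\}$, sub-Gaussian concentration, and the binomial count---combine cleanly, with the combinatorial cost $\binom{n}{k}\binom{m}{k-1}$ balanced precisely against the Gaussian tail to produce the symmetric logarithm $\log(2n/m + 2m/n)$.
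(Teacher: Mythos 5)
Your proposal follows the same route as the paper's own proof: recast \eqref{eq:convex program} as the linear program \eqref{eq:linear-program}, observe that the optimum is attained at a vertex characterized by a support set $T$ and a set of active sign constraints $\Omega$ with $|T|=|\Omega|+1$, parametrize the candidate solution by the (essentially unique) unit vector pinned down by these $n-1$ equations, exploit its independence from $\{\va_i : i \notin \Omega\}$ to apply sub-Gaussian concentration, and close with a union bound over the $\sum_k \binom{n}{k}\binom{m}{k-1} \le \exp(Cm\log(2n/m+2m/n))$ vertex types. The steps, including the identity $\|\hat{\vx}\|_2 = m/\|\mA\vv\|_1$ and the choice $t \asymp \sqrt{\log(2n/m+2m/n)}$, match the paper's argument, so the proposal is correct.
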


\begin{remark}
  Note that the sparsity of the signal $\vx$ plays no role in Lemma~\ref{lem:min-l2}; the result holds
  uniformly for all signals $\vx$.
\end{remark}

Let us assume that Lemma~\ref{lem:min-l2} is true for a moment, 
and show how together with Lemma~\ref{lem:L1 RIP} it implies Theorem~\ref{thm:effective sparsity}.

\begin{proof}[Proof of Theorem~\ref{thm:effective sparsity}]
With probability at least $1 - C \exp(-cm)$, the conclusions of both Lemma~\ref{lem:min-l2} and Lemma~\ref{lem:L1 RIP} with $t=1/4$ hold.
Assume this event occurs. 
Consider a signal $\vx$ as in Theorem~\ref{thm:effective sparsity} and the corresponding solution $\hat{\vx}$ of \eqref{eq:convex program}. 
By Lemma~\ref{lem:min-l2}, the latter satisfies
\begin{equation}							\label{eq:twonorm large}
\twonorm{\hat{\vx}} \geq c/\sqrt{\log(2n/m + 2m/n)}.
\end{equation}
Next, consider
$$
\lambda = \frac{1}{m} \sum_{i=1}^m \abs{ \< \va_i, \vx \> } = \frac{1}{m} \|\mA\vx\|_1.
$$
Since by the assumption on $\vx$ we have $\vx/\|\vx\|_2 \in K_{n,s} \cap S^{n-1}$, Lemma~\ref{lem:L1 RIP} with $t=1/4$
implies that 
\begin{equation}							\label{eq:lambda large}
\l \ge \frac{\|\vx\|_2}{2}.
\end{equation}
By definition of $\l$, the vector $\l^{-1}\vx$ is feasible for the program \eqref{eq:convex program}, so the solution $\hat{\vx}$ 
of this program satisfies
$$
\|\hat{\vx}\|_1 \le \|\l^{-1}\vx\|_1 = \l^{-1}\|\vx\|_1.
$$
Putting this together with \eqref{eq:lambda large} and \eqref{eq:twonorm large}, we conclude that
$$
\frac{\onenorm{\hat{\vx}}}{\twonorm{\hat{\vx}}} 
\leq \frac{\onenorm{\vx}}{\lambda \twonorm{\hat{\vx}}} 
\leq \frac{2 \onenorm{\vx}}{\twonorm{\vx} \twonorm{\hat\vx}} 
\leq \frac{\onenorm{\vx}}{\twonorm{\vx}} \cdot C \sqrt{\log(2n/m + 2m/n)}.
$$
This completes the proof of Theorem~\ref{thm:effective sparsity}.
\end{proof}

In the rest of this section we prove Lemma~\ref{lem:min-l2}. The argument based on the observation 
that the set of possible solutions $\hat{\vx}$ of the convex program \eqref{eq:convex program} for all $\vx$ and corresponding $\vy$ 
is finite, and its cardinality can be bounded by $\exp(Cm \log(2n/m+2m/n))$. For each fixed solution $\hat{\vx}$, 
a lower bound on $\|\hat{\vx}\|_2$ will be deduced from Gaussian concentration inequalities, 
and the argument will be finished by taking a union bound over $\hat{\vx}$.

\medskip

It may be convenient to recast the convex minimization program \eqref{eq:convex program} as a {\em linear program} 
by introducing the dummy variables $\vu = (u_1, u_2, \hdots, u_n)$:

\begin{equation} \label{eq:linear-program}
\min \sum_{i=1}^n u_i \quad \text{such that:} \quad
\begin{cases} 
  -u_i \leq x_i' \leq u_i, & i = 1, 2, \hdots, n;\\
  y_i \< \va_i, \vx'\>  \geq 0, & i = 1, 2, \hdots, m;\\
  \frac{1}{m}\sum_{i=1}^m y_i \< \va_i, \vx'\> \geq 1.
\end{cases} 
\end{equation}


The feasible set of the linear program \eqref{eq:linear-program} is a polytope in $\R^{2n}$, and
the linear program attains a solution on a vertex of this polytope.  
Further, since the random Gaussian vectors $\va_i$ are in general position, 
one can check that the solution of the linear program is unique with probability $1$.
Thus, by characterizing these vertices and pointing out the relationship between $u_i$ and $\hat{x}_i$, 
we may reduce the space of possible solutions $\hat{\vx}$.  This is the content of our next lemma. 
Given subsets $T \subset \{1, 2, \hdots, n\}$, $\Omega \subset \{1, 2, \hdots, m\}$, 
we define $\mA_T^\Omega$ to be the submatrix of $\mA$ with columns indexed by $T$ and rows indexed by $\Omega$.  

\begin{lemma}[Vertices of the feasible polytope]  
  With probability $1$, the linear program \eqref{eq:linear-program} attains a solution $(\hat{\vx}, \vu)$ at a point which satisfies the following for some $T \subset \{1, 2, \hdots, n\}$ and $\Omega \subset \{1, 2, \hdots, m\}$:
  \begin{enumerate}
    \item $u_i = \abs{\hat{x}_i}$;
    \item $\supp(\hat{\vx}) = T$;
    \item $\abs{T} = \abs{\Omega} + 1$;
    \item $\mA^\Omega_T \, \hat{\vx}_T = 0$;
    \item $\frac{1}{m} \sum_{i=1}^m \abs{\< \va_i, \hat{\vx}\>} = 1$.
  \end{enumerate}
\end{lemma}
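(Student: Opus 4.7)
The plan is to analyze \eqref{eq:linear-program} as a linear program in the $2n$ variables $(\vx', \vu)$ and to exploit the almost-sure general position of the Gaussian rows $\va_i$ to deduce non-degeneracy at an optimal vertex.

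Property (1) is automatic at any optimum: if $u_i > |\hat{x}_i|$ for some $i$, then $u_i$ could be strictly decreased while preserving all the constraints of \eqref{eq:linear-program}, contradicting minimality of $\sum_i u_i$. I would next verify property (5). Since $y_i \in \{-1,+1\}$ with probability one, and the feasibility constraint $y_i \< \va_i, \hat{\vx}\> \geq 0$ forces $\< \va_i, \hat{\vx}\>$ to share the sign of $y_i$, we have $y_i \< \va_i, \hat{\vx}\> = |\< \va_i, \hat{\vx}\>|$. Hence the last constraint in \eqref{eq:linear-program} reads $\frac{1}{m}\sum_i |\< \va_i, \hat{\vx}\>| \geq 1$; if this held strictly at an optimum, the scaled vector $\alpha\hat{\vx}$ with $\alpha \in (0,1)$ close to $1$ would remain feasible while strictly decreasing $\|\alpha\hat{\vx}\|_1$, a contradiction. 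Equality therefore holds, establishing property (5).

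For properties (2)--(4), I would invoke the standard fact that any bounded-below linear program over a line-free nonempty polytope attains its optimum at a vertex (line-freeness is immediate from the box constraints $-u_i \leq x_i' \leq u_i$, and the objective is bounded below since the last two sets of constraints force $\|\mA\vx'\|_1 \geq m$, hence $\|\vx'\|_1 \geq m/\|\mA\|_{1 \to 1} > 0$). At a vertex in $\R^{2n}$, at least $2n$ linearly independent constraints are active. Writing $T = \supp(\hat{\vx})$ with $|T| = k$ and $\Omega = \{i \in [m] : \< \va_i, \hat{\vx}\> = 0\}$, the box constraints contribute $2n - k$ active constraints ($k$ from $i \in T$, and $2(n-k)$ from $i \notin T$ since then $\hat{x}_i = u_i = 0$), the tight constraint from (5) contributes one, and the sign constraints contribute $|\Omega|$. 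Non-degeneracy, which I claim holds almost surely, forces the total to equal $2n$, which gives $|\Omega| = k - 1$ and hence property (3). Properties (2) and (4) are then immediate: $T$ is by definition the support of $\hat{\vx}$, and the relations $\< \va_i, \hat{\vx}\> = 0$ for $i \in \Omega$ amount to $\mA_T^\Omega \hat{\vx}_T = 0$ since $\hat{\vx}$ vanishes off $T$.

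The main obstacle is the non-degeneracy claim. I would argue by enumeration: for each combinatorial configuration --- a choice of $T$, of signs on $T$, and of $\Omega$ with $|\Omega| = k - 1$ --- the $2n$ equations defining the candidate vertex depend linearly on the Gaussian rows $\va_i$. Any additional active constraint (an extra sign constraint $\< \va_j, \hat{\vx}\> = 0$ for $j \notin \Omega$, or a collapse $\hat{x}_i = 0$ for some $i \in T$) forces an independent Gaussian vector to satisfy a fixed nontrivial linear equation in $\R^n$, which is a probability-zero event. A union bound over the finitely many configurations then yields non-degeneracy almost surely and completes the argument.
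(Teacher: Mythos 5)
Your proof follows the same strategy as the paper's: establish (1) and (5) directly from optimality, then argue that (2)--(4) follow because the LP attains its optimum at a vertex of the feasible polytope in $\R^{2n}$, with the same count of active constraints ($2n-k$ from the box constraints, one from the tightness of the normalization constraint, and the remainder from the sign constraints). The one small difference is that you define $\Omega$ as the full set of active sign constraints and argue via an explicit union-bound non-degeneracy claim that $|\Omega|$ is exactly $k-1$, whereas the paper simply observes that at least $|T|-1$ sign constraints must be active and then selects $\Omega$ to be any $|T|-1$ of them; both yield the statement as claimed, and the paper silently relies on the same almost-sure general position of the Gaussian rows when it later asserts that $\mA_T^\Omega$ has a one-dimensional null space.
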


\begin{proof}
Part (1) follows since we are minimizing $\sum u_i$. Part (5) follows since
\[\frac{1}{m} \sum_{i=1}^m y_i \< \va_i, \hat{\vx}\> = \frac{1}{m} \sum_{i=1}^m \abs{\< \va_i, \hat{\vx}\>}\]
combined with the fact that we are implicitly minimizing $\onenorm{\vx'}$. 
Parts (2)--(4) will follow from the fact that \eqref{eq:linear-program} achieves its minimum at a vertex.   
The vertices are precisely the feasible points for which some $d$ of the inequality constraints achieve equality, 
provided $\hat{\vx}$ is the unique solution to those $d$ equalities.  Since $(\hat{\vx}, \vu) \in \R^{2n}$ at least $2n$ of the constraints must be equalities.  We now count equalities based on $T$ and $\Omega$.  

We first consider the constraints $-u_i \leq x_i' \leq u_i,  i = 1, 2, \hdots, n$. 
If $\hat{x}_i = 0$ we have two equalities, $-u_i = \hat{x}_i$ and $u_i = \hat{x}_i$, otherwise, we have one.  This gives $n + \abs{T^c}$ equalities. Part (5) gives one more equality. 
This leaves us with at least $2n - n - \abs{T^c} -1 = \abs{T} - 1$ equalities that must be satisfied out of the equations $y_i \< \va_i, \hat{\vx}\> \geq 0$.  Thus, we may take $|\Omega| = \abs{T} - 1$.
\end{proof}

\begin{proof}[Proof of Lemma \ref{lem:min-l2}]
We may disregard the dummy variables $(u_i)$ and consider that the solution $\hat{\vx} = \vx'$ must satisfy conditions (2)--(5) above 
for some $T$ and $\Omega$.  
We will show that with high probability, any such vector $\vx' \in \R^n$ is lower bounded in the Euclidean norm. 

Let us first fix sets $T$ and $\Omega$, and consider a vector $\vx'$ satisfying (2)--(5). We represent it as
$$
\vx' = \mu \bar{\vx} \quad \text{for some } \mu>0 \text{ and } \|\bar{\vx}\|_2 =1.
$$
Our goal is to lower bound $\mu$. 
By condition (4) above, we have
$\mA^\Omega_T \, \bar{\vx}_T = 0$
which, with probability 1, completely determines the vector $\bar{\vx}$ up to multiplication by $\pm 1$ 
(since $\abs{T} = \abs{\Omega} + 1$ and $\bar{\vx}_{T^c} = 0$).  
Moreover, since $\supp(\bar{\vx}) = \supp(\vx') = T$, we have 
$0 = \mA^\Omega_T \, \bar{\vx}_T = \mA^\Omega \bar{\vx}$, so $\< \va_i, \vx'\> = 0$ for $i \in \Omega$.
Using this with together with condition (5), we obtain
$$
1 = \mu \, \frac{1}{m} \sum_{i=1}^m \abs{\< \va_i, \bar{\vx}\>} 
= \mu \, \frac{1}{m} \sum_{i \notin \Omega} \abs{\< \va_i, \bar{\vx}\>}
$$
and thus
\begin{equation}							\label{eq:norm vs mu}
\twonorm{\vx'} = \mu = \Big( \frac{1}{m} \sum_{i \notin \Omega} \abs{\< \va_i, \bar{\vx}\>} \Big)^{-1}.
\end{equation}
We proceed to upper bound $\frac{1}{m} \sum_{i \notin \Omega} \abs{\< \va_i, \bar{\vx}\>}$.   

Since the random vector $\bar{\vx}$ depends entirely on $\mA^\Omega_T$, it is independent of $\va_i$ for $i \notin \Omega$.  
Thus, by the rotational invariance of the Gaussian distribution, for any fixed vector $\vz$ with unit norm, we have
the following distributional estimates:\footnote{For random variables $X$, $Y$, the distributional inequality $X \stackrel{\text{dist}}{\leq} Y$ means that 
$\P\{X>t\} \le \P\{Y>t\}$ for all $t \in \R$.}
$$
\frac{1}{m} \sum_{i \notin \Omega} \abs{\< \va_i, \bar{\vx}\>} 
\stackrel{\text{dist}}{=} \frac{1}{m} \sum_{i \notin \Omega}  \abs{\< \va_i, \vz\>} 
\stackrel{\text{dist}}{\leq} \frac{1}{m}  \sum_{i=1}^m \abs{\< \va_i, \vz\>}.
$$
The last term is a sum of independent sub-Gaussian random variables, 
and it can be bounded using standard concentration inequalities. 
Specifically, applying Lemma~\ref{lem:concentration} from the Appendix, we obtain 
$$
\Pr{ \frac{1}{m} \sum_{i=1}^m \abs{\< \va_i, \vz \>} > t} \leq C \exp(-c m t^2) \quad \text{for } t \ge 2.
$$
Using \eqref{eq:norm vs mu}, this is equivalent to
$$
\Pr{\twonorm{\vx'} < 1/t} \leq C \exp(-c m t^2) \quad \text{for } t \ge 2.
$$
It is left to upper bound the number of vectors satisfying conditions (2)--(5) and to use the union bound.
Since $\abs{T} = \abs{\Omega} + 1$, the total number of possible choices for $T$ and $\Omega$ (and hence of $\vx'$) is 
$$
\sum_{i=0}^{\min (m, n-1)} {n \choose i + 1} {m \choose i} \leq \exp(C m \log (2n/m + 2m/n)).
$$
Thus, by picking $t = C_0 \sqrt{\log (2n/m + 2m/n)}$ with a sufficiently large absolute constant $C_0$, 
we find that all $\vx'$ uniformly satisfy the required estimate  $\twonorm{\vx'} \geq c/\sqrt{\log (2n/m + 2m/n)}$
with probability at least $1 - \exp(C m \log (2n/m + 2m/n)) \cdot C \exp(-c m t^2) = 1 - C\exp(-c m \log (2n/m + 2m/n))$.
Lemma \ref{lem:min-l2} is proved.
\end{proof}

\section*{Appendix. Uniform concentration inequality.}

In this section we prove concentration inequalities for
$$
\|\mA\vx\|_1 = \sum_{i=1}^m \abs{\< \va_i, \vx\> }.
$$
In the situation where the vector $\vx$ is fixed, we have a sum of independent random variables, 
which can be controlled by standard concentration inequalities:

\begin{lemma}[Concentration]			\label{lem:concentration}
  Let $n, m \in \N$ and $\vx \in \R^n$. Then, for every $t>0$ one has
  $$
  \Pr{ \abs{\frac{1}{m} \sum_{i=1}^m \abs{\< \va_i, \vx\> } - \sqrt{\frac{2}{\pi}} \, \|\vx\|_2 } > t \|\vx\|_2 } 
  \le C \exp(-cmt^2).
  $$
\end{lemma}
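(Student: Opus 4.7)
By homogeneity of both sides in $\vx$, I may assume without loss of generality that $\|\vx\|_2 = 1$. Under this normalization the random variables $g_i := \langle \va_i, \vx\rangle$ are independent standard normal, since each $\va_i \sim \NN(0,\Id)$ and $\vx$ is a unit vector. In particular, $\E |g_i| = \sqrt{2/\pi}$, so the quantity to be controlled is exactly the empirical average of the i.i.d.\ centered random variables
\[
Z_i := |g_i| - \sqrt{2/\pi}, \qquad i=1,\ldots,m,
\]
which is a standard setting for a concentration inequality.

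The plan is to show that each $Z_i$ is sub-Gaussian with sub-Gaussian norm bounded by an absolute constant, and then apply a generic Hoeffding-type bound for sums of independent sub-Gaussian random variables. For the first step, note that $|g_i|$ is a $1$-Lipschitz function of a standard Gaussian vector, so the Gaussian concentration inequality gives $\Pr{\bigl||g_i|-\sqrt{2/\pi}\bigr| > u} \le 2\exp(-cu^2)$ for all $u>0$; equivalently, $Z_i$ has sub-Gaussian norm bounded by an absolute constant. For the second step, applying the general Hoeffding-type inequality for sums of independent centered sub-Gaussian random variables yields
\[
\Pr{\Bigl|\frac{1}{m}\sum_{i=1}^m Z_i\Bigr| > t} \le C\exp(-cmt^2), \qquad t>0,
\]
which is exactly the claim after undoing the normalization $\|\vx\|_2 = 1$.

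There is no substantial obstacle here: the key structural input is simply that $|g_i|$ is sub-Gaussian with an absolute constant, which follows from Gaussian Lipschitz concentration (or, alternatively, a direct moment computation using $\E |g_i|^p \le (Cp)^{p/2}$). The only mild care needed is to distinguish the regime $t$ large, where the sub-Gaussian tail is the right bound, from the regime $t$ small, where one gets the same bound directly from the same inequality. Thus the standard sub-Gaussian Hoeffding inequality delivers the stated concentration cleanly, and no additional union bound or covering argument is needed since $\vx$ is fixed.
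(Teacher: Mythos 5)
Your proof is correct and follows essentially the same route as the paper: normalize to $\|\vx\|_2=1$, observe that $|\langle\va_i,\vx\rangle|-\sqrt{2/\pi}$ are i.i.d.\ centered sub-Gaussian with absolute-constant sub-Gaussian norm, and conclude via the Hoeffding-type inequality for sums of independent sub-Gaussians. The only cosmetic difference is that you justify sub-Gaussianity by Gaussian Lipschitz concentration, while the paper simply cites it; both are standard and equivalent here.
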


\begin{proof}
Without loss of generality we can assume that $\|\vx\|_2 = 1$. 
Then $\< \va_i, \vx\> $ are independent standard normal random variables, so 
$\E \abs{\< \va_i, \vx\> } = \sqrt{2/\pi}$. Therefore $X_i := \abs{\< \va_i, \vx\> } - \sqrt{2/\pi}$
are independent and identically distributed centered random variables. Moreover, $X_i$ are 
sub-gaussian random variable with $\|X_i\|_{\psi_2} \le C$, see \cite[Remark~18]{Vershynin2010}.
An application of Hoeffding-type inequality (see \cite[Proposition~10]{Vershynin2010}) yields
$$
\Pr{ \abs{\frac{1}{m} \sum_{i=1}^m X_i } > t } \le C \exp(-cmt^2).
$$
This completes the proof.
\end{proof}

We will now prove a stronger version of Lemma~\ref{lem:concentration} that is uniform
over all effectively sparse signals $\vx$.

\begin{lemma}[Uniform concentration]		\label{lem:L1 RIP}
  Let $n \in \N$, $t \in [0, \sqrt{2/\pi}]$, and suppose that $m \geq C t^{-4} s \log(2n/s)$. Then 
  $$
  \Pr{\sup_{\vx \in K_{n,s} \cap S^{n-1}} 
    \abs{\frac{1}{m} \sum_{i=1}^m \abs{\< \va_i, \vx\>} - \sqrt{\frac{2}{\pi}}} > t} 
  \leq C \exp(-c m t^2).
  $$
\end{lemma}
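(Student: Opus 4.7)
The plan is to combine a single-scale $\eta$-net on $K_{n,s}$ (with $\eta$ of order $t$) with the pointwise deviation bound of Lemma~\ref{lem:concentration}, and to control the residuals via a separately established constant-accuracy uniform bound. The auxiliary I would prove first is: there is an absolute constant $C_1$ such that whenever $m \geq C_0 s' \log(2n/s')$, with probability at least $1 - \exp(-cm)$ one has
\[
\sup_{\vw \in K_{n,s'},\; \vw \neq 0} \frac{g(\vw)}{\|\vw\|_2} \leq C_1, \qquad g(\vw) := \frac{1}{m}\sum_{i=1}^m |\langle \va_i,\vw\rangle|.
\]
By Lemma~\ref{lem:characterization} together with the convexity and $1$-homogeneity of $g$, it suffices to bound the supremum of $g$ over $\vz \in S_{n,s'} \cap B_2^n$. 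Union-bounding a $\tfrac14$-net of $B_2^I$ (cardinality at most $12^{s'}$) across the $\binom{n}{s'}$ coordinate subspaces $I \subset [n]$ produces a net of log-cardinality $\leq C s'\log(2n/s')$. Lemma~\ref{lem:concentration} applied with deviation $1$, together with a union bound, gives $g(\vz_0) \leq 2\|\vz_0\|_2$ for every net point $\vz_0$. Writing $M := \sup_{\vz \in S_{n,s'} \cap B_2^n} g(\vz)$, and using the key fact that for $\vz \in B_2^I$ the nearest net point $\vz_0$ lies in the same subspace $\R^I$ (so $\vz - \vz_0$ is also $s'$-sparse with $\|\vz-\vz_0\|_2 \leq \tfrac14$), the triangle inequality yields the self-referential bound $M \leq 2 + M/4$, hence $M \leq 8/3$. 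Convexifying yields the auxiliary claim with $C_1 \leq 16/3$.

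With the auxiliary in hand I would set $\eta := c_0 t$ for a small absolute constant $c_0 > 0$, pick an $\eta$-net $\NN$ of $K_{n,s}$ with $\log |\NN| \leq C s \log(2n/s)/\eta^2$ via Lemma~\ref{lem:entropy compressible}, and apply Lemma~\ref{lem:concentration} at deviation $t/3$ to each $\vx_0 \in \NN$. A union bound gives failure probability at most
\[
C \exp\!\left(\frac{C s \log(2n/s)}{\eta^2} - c m t^2\right),
\]
and the choice $\eta = c_0 t$ shows this is $\leq C\exp(-c' m t^2)$ precisely under the assumed $m \geq C t^{-4} s \log(2n/s)$. On the complementary event, for an arbitrary $\vx \in K_{n,s} \cap S^{n-1}$ and the nearest net point $\vx_0$, put $\vh := \vx - \vx_0$; then $\|\vh\|_2 \leq \eta$ and, by convexity of $K_{n,s}$, $\vh \in 2 K_{n,s}$, so $\vh/\eta \in K_{n,s'}$ with $s' = 4s/\eta^2 \asymp s/t^2$. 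The same assumption forces $m \geq C_0 s' \log(2n/s')$, so the auxiliary applies and gives $g(\vh) = \eta\, g(\vh/\eta) \leq C_1 \eta$. The triangle inequality then yields
\[
\big| g(\vx) - \sqrt{2/\pi}\big| \leq g(\vh) + \big| g(\vx_0) - \sqrt{2/\pi}\,\|\vx_0\|_2\big| + \sqrt{2/\pi}\,\big|\,1-\|\vx_0\|_2\big| \leq (C_1 + \sqrt{2/\pi})\eta + t/3,
\]
and a sufficiently small choice of $c_0$ makes the right-hand side at most $t$.

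The main obstacle is the apparent circularity in controlling the residual $g(\vh)$: uniformly bounding $g$ over the effectively-sparse class $K_{n,s'}$ is itself an $\ell_1$-RIP-type statement, so one cannot arrive at it by a direct bootstrap of the very lemma under proof. The self-referential argument in the auxiliary claim breaks this circle: because within a fixed coordinate subspace the net residual $\vz - \vz_0$ never leaves the $s'$-sparse support, its $g$-value is already controlled by $\tfrac14$ times the same supremum $M$, producing the deterministic inequality $M \leq 2 + M/4$ which closes without bootstrap. Everything else—balancing $\eta \asymp t$ against the net entropy to extract the $t^{-4}$ exponent, and convexifying from sparse to effectively sparse via Lemma~\ref{lem:characterization}—is routine.
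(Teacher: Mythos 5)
Your covering strategy mirrors the paper's: a net at scale proportional to $t$ for the concentration term, a uniform bound over effectively sparse residuals, and a balance of the two to extract the $t^{-4}$ exponent. The one real variation is how you control the residual. The paper convexifies to $S_{n,64s/t^2}$, passes from the $\ell_1$-type quantity $\frac1m\sum_i|\langle\va_i,\cdot\rangle|$ to the $\ell_2$-type quantity by Jensen, and then cites an existing restricted-isometry bound for Gaussian matrices; you instead stay with the $\ell_1$-type quantity and re-derive the uniform bound over sparse vectors from scratch via a self-referential net argument (choosing the residual's net point within the same coordinate subspace so the residual remains $s'$-sparse, giving $M\le 2+M/4$). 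This makes the argument more self-contained, but it is the standard device for proving RIP-type bounds, so in effect it inlines the proof of the cited result rather than taking a different route. The scale $\eta=c_0 t$ versus the paper's $t/4$, and the three-term versus two-term triangle inequality, are cosmetic.

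One statement needs fixing. The auxiliary claim $\sup_{\vw\in K_{n,s'},\,\vw\neq 0} g(\vw)/\|\vw\|_2\le C_1$ is false when $m\ll n$: every $\vx\in S^{n-1}$ equals $\vw/\|\vw\|_2$ for some $\vw\in K_{n,s'}$ of sufficiently small norm, so the left-hand side is $\sup_{S^{n-1}}g$, which is of order $1+\sqrt{n/m}$, not an absolute constant. What your argument actually proves, and what you actually use at the point $\vh/\eta\in K_{n,s'}$ (which satisfies $\|\vh/\eta\|_2\le 1$), is the unnormalized bound $\sup_{\vw\in K_{n,s'}}g(\vw)\le C_1$. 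The convexification $K_{n,s'}\subset 2\conv(S_{n,s'})$ together with convexity of $g$ controls $g(\vw)$ itself but not $g(\vw)/\|\vw\|_2$, since the $\ell_2$ norm of a convex combination of sparse unit-ball vectors can be arbitrarily small. Restating the auxiliary as $\sup_{\vw\in K_{n,s'}}g(\vw)\le C_1$ makes everything consistent and correct.
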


\begin{proof}
This is a standard covering argument, although the approximation step requires a little extra care. 
Let $\MM$ be a $t/4$-net of $K_{n,s} \cap S^{n-1}$.  
Since $K_{n,s} \cap S^{n-1} \subseteq K_{n,s}$, we can arrange by Lemma \ref{lem:entropy compressible} that
$$
\abs{\MM} \leq \exp(C t^{-2} s \log(2n/s)).
$$
By definition, for any $\vx \in K_{n,s} \cap S^{n-1}$ one can find $\bar{\vx} \in \MM$ such that 
$\twonorm{\vx - \bar{\vx}} \leq t/4$. So the triangle inequality yields
$$
\abs{\frac{1}{m}\sum_{i=1}^m \abs{\< \va_i, \vx \>} - \sqrt{\frac{2}{\pi}}} 
\leq \abs{\frac{1}{m}\sum_{i=1}^m \abs{\< \va_i, \bar{\vx} \>} - \sqrt{\frac{2}{\pi}}} + \frac{1}{m}\sum_{i=1}^m \abs{\< \va_i, \vx - \bar{\vx} \>}.
$$
Note that $\onenorm{\vx - \bar{\vx}} \le \onenorm{\vx} + \onenorm{\bar{\vx}} \le  2 \sqrt{s}$. 
Together with $\twonorm{\vx - \bar{\vx}} \leq t/4$ this means that 
$$
\vx - \bar{\vx} \in \frac{t}{4} \cdot K_{n, 64s/t^2}.
$$
Consequently, 
\begin{align}		\label{eq:bound-using-cover}
\sup_{\vx \in K_{n,s} \cap S^{n-1}} \abs{\frac{1}{m} \sum_{i=1}^m \abs{\< \va_i, \vx\>} - \sqrt{\frac{2}{\pi}}} 
&\le \sup_{\bar{\vx} \in \MM} \abs{\frac{1}{m}\sum_{i=1}^m \abs{\< \va_i, \bar{\vx} \>} - \sqrt{\frac{2}{\pi}}}  
  +  \frac{t}{4} \cdot \sup_{\vw \in K_{n, 64s/t^2}} \frac{1}{m}\sum_{i=1}^m \abs{\< \va_i, \vw \>} \\
&=: R_1 + \frac{t}{4} \cdot R_2. \nonumber
\end{align}
We bound the terms $R_1$ and $R_2$ separately.
For simplicity of notation, we assume that $64s/t^2$ is an integer, as the non-integer case will have no significant effect on the result. 
 
A bound on $R_1$ follows from the concentration estimate in Lemma~\ref{lem:concentration} and a union bound:  
\begin{equation}		\label{eq:R1}
\Pr{R_1 > t/4} 
\leq \abs{\MM} \cdot C \exp(- c m t^2) 
\leq C \exp(C t^{-2} s \log(2n/s) - c m t^2) 
\leq C \exp(-cm t^2)
\end{equation}
provided that $m \geq C t^{-4} s \log(2n/s)$.

Next, due to Lemma \ref{lem:characterization} and Jensen's inequality, we have
$$
R_2 \le  2 \sup_{\vw \in S_{n, 64s/t^2}}\frac{1}{m} \sum_{i=1}^m \abs{\< \va_i, \vw\>} 
\leq 2 \sup_{\vw \in S_{n,64s/t^2}} \Big( \frac{1}{m}\sum_{i=1}^m \< \va_i, \vw\> ^2 \Big)^{1/2}
=: 2 R_2'.
$$
The quantity $R_2'$ has been well studied in compressed sensing; it is bounded by the restricted isometry 
constant of the matrix $\frac{1}{\sqrt{m}} A$ at sparsity level $64 s/t^2$. 
Probabilistic bounds for the restricted isometry constants of Gaussian matrices are well known,
and have been derived in the earliest compressed sensing works \cite{Candes2006}.  
We use the bound in \cite[Theorem~65]{Vershynin2010} that gives 
\begin{equation}
\Pr{R_2' > 1.5} \leq 2 \exp(-cm) 
\end{equation}
provided that $m \geq C t^{-2} s \log(n/s)$. Thus 
$$
\Pr{R_2 > 3} \leq 2 \exp(-c m).
$$
Combining this and \eqref{eq:R1} we conclude that  
$$
\Pr{R_1 + \frac{t}{4} \cdot R_2 > t} \le C' \exp(-cm t^2)
$$
where we used the assumption that $t \le \sqrt{2/\pi}$.
This and \eqref{eq:bound-using-cover} complete the proof. 
\end{proof}

\bibliographystyle{acm}
\bibliography{pv-1bitcs}

\end{document}